\begin{document}

\preprint{APS/123-QED}

\title{Analytical results for laser models producing a beam with sub-Poissonian photon statistics and coherence scaling as the Heisenberg limit}

\author{L.~A.~Ostrowski}
 \email{l.ostrowski@griffith.edu.au}
\affiliation{%
 Centre for Quantum Dynamics, Griffith University, Yuggera Country, Brisbane, Queensland 4111, Australia
}%
\author{T.~J.~Baker}
\affiliation{%
 Centre for Quantum Dynamics, Griffith University, Yuggera Country, Brisbane, Queensland 4111, Australia
}%
\author{D.~W.~Berry}
\affiliation{
 Department of Physics and Astronomy, Macquarie University, Dharug Country, Sydney, New South Wales 2109, Australia
}%
\author{H.~M.~Wiseman}
 \email{h.wiseman@griffith.edu.au}
\affiliation{%
 Centre for Quantum Dynamics, Griffith University, Yuggera Country, Brisbane, Queensland 4111, Australia
}%

\date{\today}

\begin{abstract}

Recent advances in laser theory have demonstrated that a quantum enhancement is possible for the production of coherence $\mathfrak{C}$ by a continuous-wave laser device. Curiously, natural families of laser models that achieve Heisenberg-limited scaling for coherence produce the most coherence when the beam exhibits sub-Poissonian photon statistics. In this work, we provide an analytical treatment of those novel families of laser models by considering a parameter regime that permits a linearization. We characterize the dynamics of each laser system, and find that some of the intuitions from standard laser theory may be applied here. Specifically, the intracavity number dynamics are well-described as an Ornstein-Uhlenbeck process, while the intracavity phase dynamics are well-described in terms of a physically realizable ensemble of pure states, which evolve according to pure phase diffusion. Unlike a standard laser, however, we find that the pure states comprising the ensemble in the Heisenberg-limited lasers are substantially phase squeezed. From our dynamical analysis, we deduce various quantities of the beam for each laser family, including the first- and second-order Glauber coherence functions, intensity noise spectrum, Mandel-Q parameter and coherence $\mathfrak{C}$. In addition, inspired from these phase diffusion dynamics, we derive an upper bound on laser coherence $\mathfrak{C}\lesssim1.1156\mu^4$---which is tighter by a factor of $3/8$ when compared to that derived in [Baker \textit{et} al.,~\citetalias{Baker2020}]---by making one of the assumptions of that paper slightly stronger.

\end{abstract}

\maketitle


\section{Introduction}
\label{intro}

An essential feature of a laser beam that gives it widespread utility is its ability to act as a classical phase reference. A measure of this property has been proposed in Ref.~\cite{Baker2020}, with the quantity coined as ``laser coherence", denoted by $\mathfrak{C}$. This is defined as the mean number of photons within the maximally populated mode of a quantum field (within some frequency band, if required). In particular scenarios, this quantity can be expressed in a rather intuitive manner. For example, if we consider a light beam that is in a coherent state undergoing a process of pure phase diffusion---which is regarded an apt description of an ideal laser beam~\cite{Louisell1973,SSL_1974}---then one has $\mathfrak{C}=4\mathcal{N}/\ell$~\cite{Baker2020}. Here, $\mathcal{N}$ is the beam photon flux and $\ell$ is the diffusion rate. For this case, we can consider $\mathfrak{C}$ as roughly the number of photons emitted into the laser beam within a coherence time $\tau_{\rm coh} = 1/\ell$; i.e., the number of photons emitted into the beam that are mutually coherent.

With this measure of coherence, a fundamental question can be posed. That is, given $\mu$, the mean number of coherent excitations stored within a laser device, what are the ultimate bounds on $\mathfrak{C}$ in terms of that energy resource? The answer that one would obtain from an application of standard laser theory is that $\mathfrak{C}$ can be much greater than $\mu$, with the quantum limit $\mathfrak{C}_{\rm SQL} = O(\mu^2)$~\cite{Wiseman_1999}. This limit, however, arises as result of imposing particular assumptions about how a laser device gains and subsequently releases its energy to form a beam. Because these assumptions stem from conventional techniques of building a laser, rather than any fundamental aspect of quantum theory, this limit represents a standard quantum limit (SQL).

Recent works have challenged this notion by relaxing the common assumptions about how a laser device operates, and investigating the coherence properties of the ensuing laser beams~\cite{Baker2020,Liu2021,Ostrowski2022a,Ostrowski2022b}. Notably, in Ref.~\cite{Baker2020} an upper bound $\mathfrak{C}_{\rm HL}=O(\mu^4)$ was rigorously proven, based on a set of four natural assumptions for a laser device to satisfy. These assumptions constrain the beam properties to closely resemble those of an ideal laser beam state mentioned above, and require the phase information in the beam to proceed \textit{exclusively} from the coherent excitations $\mu$ stored within the device. Crucially, however, these assumptions are agnostic about the exact mechanisms by which the stored coherent excitations produce coherence in the beam. Moreover, that upper bound was shown to be achievable; a laser model saturating the $\mu^4$ scaling law for $\mathfrak{C}$ was shown to exist, while also satisfying the four assumptions. As such, $\mathfrak{C}_{\rm HL}=O(\mu^4)$ can be considered as an ultimate quantum limit, or Heisenberg limit (HL), for laser coherence, suggesting the possible advent of a new class of ultra-coherent beams of light.

The Heisenberg limit for $\mathfrak{C}$ has been generalized in Refs.~\cite{Ostrowski2022a,Ostrowski2022b}, whereby the upper bound $\mathfrak{C}_{\rm HL}=O(\mu^4)$ was derived using relaxed constraints on the beam properties. These relaxed constraints allow the beam to deviate further from a phase-diffusing coherent state, and encompass laser beams which can exhibit a significant degree of sub-Poissonianity in the photon statistics of the output field. The display of sub-Poissonian photon statistics within a laser beam can be seen as a signature of amplitude or number squeezing~\cite{Ralph_2004,Bachor_2019}, and such bright, nonclassical light finds numerous applications in quantum-optical technologies. Examples of these applications include atomic control~\cite{Goldberg_2020}, communication~\cite{Golubeva_2008,Hosseinidehaj_2022}, computation~\cite{Korolev_2019} and metrology~\cite{S_nchez_Mu_oz_2021}. The generalization presented in Refs.~\cite{Ostrowski2022a,Ostrowski2022b} thus facilitated the study of laser devices that produce beams exhibiting both Heisenberg-limited coherence, and photon number fluctuations below the shot-noise limit, which are two key properties that could find utility throughout precision technology. 

An additional curious insight was found in Refs.~\cite{Ostrowski2022a,Ostrowski2022b}. There, three families of laser models were developed, and each were shown to exhibit Heisenberg limited coherence, with $\mathfrak{C}$ saturating the $\mu^4$ scaling law. Two of those families also demonstrated sub-Poissonian beam photon statistics, which was quantified by the Mandel-$Q$ parameter for long counting intervals~\cite{Mandel79,Zou1990,Davidovich_1996}. From an investigation of those two sub-Poissonian laser families, it was found that an increase in the degree of sub-Poissonianity in the beam was positively correlated with an increase in its coherence. That is, the opposite of a trade-off was demonstrated between those two quantities for those particular families of laser models.

All studies to date regarding laser models that exhibit Heisenberg-limited coherence have been based either on numerical or heuristic analyses. In this work, we derive analytical results for a variety of these models, in order to deepen our intuition regarding these novel types of laser devices. In particular, we consider those three families developed in Refs.~\cite{Ostrowski2022a,Ostrowski2022b} mentioned above, and specialize to a parameter regime that permits a linearization. From our analysis, we obtain a clear picture of the dynamics for each system, and show that many of the intuitions from standard laser theory apply here. For example, we find that the evolution of the cavity state for our considered families of laser models resembles a pure state with its phase variable $\phi$ undergoing pure diffusion. This behavior is analogous to a stochastically-evolving coherent state in a standard laser cavity~\cite{Noh_2008,Carmichael_2010}; however, unlike a coherent state, the pure cavity state in the Heisenberg-limited case exhibits substantial phase squeezing. Additionally, we find that the intracavity number dynamics resembles an Ornstein-Uhlenbeck process, much like the behavior seen in sub-Poissonian laser models developed throughout the late 1980s and early 1990s~\cite{Haake_1989}. These insights permit straightforward formulas for many important physical properties of the beams to be obtained analytically. These include the coherence $\mathfrak{C}$, along with the Mandel-$Q$ parameter and intensity noise spectrum.

Motivated by the aforementioned phase diffusion dynamics, we also derive a tighter upper bound on the coherence, $\mathfrak{C}\lesssim1.1156\mu^4$ for $\mu\gg1$; the prefactor here being $3/8$ times that derived in Ref.~\cite{Baker2020}. (Note that we use the symbol $\lesssim$ in an asymptotic sense, where $f(x)\lesssim g(x)$ should be understood as ${\rm lim}_{x\to\infty}\{{\rm sup}_{x'\geq x}[f(x')/g(x')]\}\leq1$.) While the condition used on the beam to obtain this upper bound is more elegant than those used in previous proofs, it is slightly stricter. Regardless, we provide strong evidence that a laser model exists that both satisfies this stricter beam constraint and exhibits Heisenberg-limited coherence. This model is a special case of the families studied within this work.

This paper is structured as follows. In Section~\ref{models}, we review some of the physical concepts that are relevant to this study, before introducing the families of laser models that we will specifically consider. Here, we also provide a description of the parameter regime which we specialize to in order for our analytical treatment to proceed. In Section~\ref{number}, we investigate the cavity photon-number dynamics for our families of laser models, from which we evaluate properties relating to the beam photon-number statistics. In Section~\ref{phase}, we look at the phase dynamics for our families of laser models. Here, we are able to show that a physically realizable (PR) ensemble~\cite{Wiseman2001} of cavity states exist for each family, and that the dynamics of these ensembles is that of pure phase diffusion. Following this, we are able to evaluate the beam coherence $\mathfrak{C}$ for each family. The tighter upper bound for $\mathfrak{C}$ is proven in Section~\ref{tighter ub}. There, we also provide evidence that a laser model exists which satisfies the constraints used to derive that bound, which also exhibits Heisenberg-limited coherence. Section~\ref{conclusion} concludes this work and provides a brief discussion of future research directions.

\section{Heisenberg-Limited, Sub-Poissonain Laser Models}
\label{models}

\subsection{Coherence Properties of a Laser Beam}\label{sec:coherence_general}

Ideally, the output of a continuous-wave laser in free space may be considered as a one-dimensional bosonic beam that propagates at a fixed speed, with translationally invariant statistics. Such a beam can be described by a single-parameter field operator $\hat{b}(t)$, satisfying the commutation relation\footnote{While a one-dimensional field is considered here, a similar commutation relation can be derived without making a paraxial approximation~\cite{van_Enk_2004}.} $[\hat{b}(t),\hat{b}^\dagger(t')] = \delta(t-t')$\cite{Gardiner_1985}. The optical coherence properties of this beam can be characterized using the family of correlation functions introduced by Glauber~\cite{Glauber_1963}. This family of $n^{\rm th}$-order correlation functions are defined in terms of $2n$ field operators evaluated at different spatial coordinates $s_i$ of the field,
\begin{align}\label{Glauber_general}
    G^{(n)}(s_1,...,s_{2n}):=\langle\hat{b}^\dagger(s_1)...\hat{b}^\dagger(s_n)\hat{b}(s_{n+1})...\hat{b}(s_{2n})\rangle,
\end{align}
while the corresponding normalized forms are given as
\begin{align}\label{gn normal}
    \begin{split}
        g^{(n)}(s_1,...,s_{2n})&:=\frac{G^{(n)}(s_1,...,s_{2n})}{\Pi_{i=1}^{2n}\sqrt{G^{(1)}(s_i,s_i)}} \\ & =\frac{G^{(n)}(s_1,...,s_{2n})}{\mathcal{N}^n}.
    \end{split}
\end{align}
The second equality in~(\ref{gn normal}) follows from the translational invariance of the field, and we have defined the \textit{photon flux} in the beam as $\mathcal{N} := G^{(1)}(s_i,s_i)$, which has units of inverse time.

Throughout this work, the first- and second-order correlation functions are the subject of particular interest. The former, $G^{(1)}(s+t,s)$, is useful for quantifying the phase fluctuations in the field, and the coherence $\mathfrak{C}$ of a laser beam may be expressed in terms of this quantity. In particular, for a beam that exhibits translationally invariant statistics, the coherence, as defined in Ref.~\cite{Baker2020}, is proportional to the peak of the dimensionless power spectrum [i.e., the peak of the Fourier transform of $G^{(1)}(s+t,s)$]. Working in an optical frame rotating at this peak frequency, we thus have~\cite{Baker2020}
\begin{align}\label{coh g1}
    \mathfrak{C} = \int_{-\infty}^\infty dt\,G^{(1)}(s+t,s).
\end{align}

The second-order correlation function with the specific time ordering
\begin{align}\label{g2ps}
    g^{(2)}_{\rm ps}(t) := g^{(2)}(s,s+t,s+t,s)
\end{align}
is useful for characterizing the photon statistics of the beam (as indicated by the subscript ``ps"). The function $g_{\rm ps}(t)$ can be interpreted as the relative change in the likelihood of observing a subsequent photon at a time $t$ later than an initial beam photon detection. The utility of this correlation function can be highlighted by considering, for example, the Mandel-$Q$ parameter~\cite{Mandel79} for the beam over some time interval $[s,s+t)$, defined as
\begin{align}\label{mandelQ}
    Q_{s,t}:=\frac{\langle(\Delta\hat{n}_{s,t})^2\rangle - \langle\hat{n}_{s,t}\rangle}{\langle\hat{n}_{s,t}\rangle},
\end{align}
with the beam photon number operator in $[s,s+t)$ being $\hat{n}_{s,t} := \int_s^{s+t}dt'\,\hat{b}^\dagger(t')\hat{b}(t')$ and $\langle(\Delta \hat{c})^2\rangle=\langle \hat{c}^2 \rangle - \langle \hat{c}\rangle^2$ denoting the variance. Eq.~(\ref{mandelQ}) quantifies the degree of sub-Poissonianity in the field (occurring when $-1\leq Q_{s,t}<0$), which can be viewed as a measure of non-classicality~\cite{Davidovich_1996}. Again using the time-translational invariance of $\hat{b}(t)$, we can express the Mandel-$Q$ parameter as~\cite{Zou1990}
\begin{align}\label{Q_t}
    Q_{t} = \frac{\mathcal{N}}{t}\int_{-t}^{t}dt'(t-|t'|)\left[g_{\rm ps}^{(2)}(t')-1\right],
\end{align}
where we have dropped the redundant subscript $s$. For the types of nonclassical beams that are considered in this work, it will be the case that the function $[g^{(2)}_{\rm ps}(t)-1]$ is concave for $t>0$, and monotonically increases from a negative value at $t=0$ to zero in the long-time limit $t\to\infty$. For such fields, it follows from Eq.~(\ref{Q_t}) that maximal sub-Poissonianity is found in the limit of long counting intervals, with $Q_{t\to\infty}$.

Another way one can quantify the photon statistics of the beam is by analyzing its intensity noise spectrum. This is defined as
\begin{align}\label{spec intense}
    \begin{split}
        S_I(\omega) := \int_{-\infty}^\infty dte^{i\omega t}\Big[ & \langle\hat{b}^\dagger(s+t)\hat{b}(s+t)\hat{b}^\dagger(s)\hat{b}(s)\rangle \\ & - \langle\hat{b}^\dagger(s+t)\hat{b}(s+t)\rangle\langle\hat{b}^\dagger(s)\hat{b}(s)\rangle\Big] \\ & \hspace{-2.25cm} = \mathcal{N} + \mathcal{N}^2\int_{-\infty}^\infty dt\,{\rm cos}(\omega t)\left[g_{\rm ps}^{(2)}(t) - 1\right],
    \end{split}
\end{align}
where the equality on the second line also follows from the time-translational invariance of $\hat{b}(t)$. The first term on the the second line of Eq.~(\ref{spec intense}) is a result of the shot noise in the beam, and for a Poissonian beam [$g^{(2)}_{\rm ps}(t) = 1$] the spectrum~(\ref{spec intense}) will be equal to this value for all $\omega$. Alternatively, for a beam exhibiting some degree of sub-Poissonianity, where the intensity fluctuations of the beam are reduced below the shot noise limit, then one has $S_I(\omega)<\mathcal{N}$ for some range of $\omega$~\cite{Tuck_2006}. In this scenario, we will refer to such beams as being \textit{number squeezed}. Evidently, the value of the intensity noise spectrum~(\ref{spec intense}) is closely related to the long-time Mandel-$Q$ parameter; from Eqs.~(\ref{Q_t}) and (\ref{spec intense}), we find 
\begin{align}
    \lim_{t\to\infty}Q_{t} = \frac{S_{I}(\omega=0)}{\mathcal{N}}-1.
\end{align}

A laser is typically idealized as a device that produces a beam that is describable as a coherent state undergoing pure phase diffusion~\cite{Louisell1973,SSL_1974}. That is, the beam resides in an eigenstate $\ket{\beta(t)}$ of $\hat{b}(t)$, so that, for all $t$,
\begin{align}\label{ideal laser}
    \hat{b}(t)\ket{\beta(t)} = \beta(t)\ket{\beta(t)},
\end{align}
with the eigenvalue $\beta(t) = \sqrt{\mathcal N}e^{i\sqrt{\ell}W(t)}$. Here, $\ell$ represents the rate of phase diffusion and $W(t)$ is a Wiener process. The Glauber coherence functions for such a state can be readily evaluated, giving $G^{(1)}(s+t,s) = \mathcal{N}e^{-\ell|t|/2}$ and $g^{(2)}_{\rm ps}(t) = 1$. The former expression implies a Lorentzian power spectrum with a linewidth (i.e., the full width at half maximum) of $\ell$, and subsequently, from Eq.~(\ref{coh g1}), a straightforward formula for the coherence, $\mathfrak{C} = 4\mathcal{N}/\ell$, as claimed in the introduction. The latter expression, $g^{(2)}_{\rm ps}(t) = 1$, implies Poissonian beam photon statistics, where there is no correlation between the detection of beam photons.

If one restricts $\hat{b}(t)$ to closely exhibit these ideal laser beam properties, then the Heisenberg limit for the coherence in the beam is upper bounded by $\mathfrak{C}\lesssim2.9748\mu^4$ for $\mu\gg1$~\cite{Baker2020}. Here, we have the energy resource $\mu:=\langle\hat{n}_{\rm c}\rangle_{\rm ss}$, where the subscript ``ss" specifies a steady state average, which is assumed to exist and be unique, and $\hat{n}_{\rm c}$ is the generator of phase shifts on the state of the laser device $\rho_{\rm c}$. That is, a phase shift of $\theta$ on $\rho_{\rm c}$ is achieved by the action of the superoperator $\mathcal{U}^\theta_{\rm c}(\bullet):=e^{i\theta\hat{n}_{\rm c}}\bullet e^{-i\theta\hat{n}_{\rm c}}$. In this case, the generator $\hat{n}_{\rm c}$ has nonnegative integer eignevalues, such that $\mathcal{U}^{\theta+2\pi}_{\rm c}(\bullet) = \mathcal{U}^\theta_{\rm c}(\bullet)$, and $\mu$ may be interpreted as an excitation number. Often (but not necessarily), $\mu$ represents mean photon number within a single mode of a laser cavity. Indeed, laser models which achieve the upper bound scaling of $\mathfrak{C}=\Theta(\mu^4)$ have been shown exist, in principle, with the energy resource $\mu$ stored in a single mode~\cite{Baker2020,Ostrowski2022a,Ostrowski2022b}. The key components of these laser models that enable them to achieve a coherence enhancement are the specific quantum interactions between the device and its environment. These interactions govern the manner in which the device gains incoherent excitations from a source, and releases its stored coherent excitations to form a beam. They are designed in such a way that the beam exhibits properties that are very close to the ideal beam state described in Eq.~(\ref{ideal laser}), yet they preserve phase information within the device more effectively compared to a conventional laser.

By letting go of the strict notion that a laser beam closely approximates the ideal beam state $\ket{\beta(t)}$ in Eq.~(\ref{ideal laser}), Heisenberg limits for other, more exotic, types of laser beams may also be considered. For example, \textit{sub-Poissonian lasers} produce a highly coherent beam of radiation and exhibit photon number fluctuations below the shot noise limit~\cite{golubev1984,Machida_1987,Ralph_2004,Walls_Milburn_2008}. Recently, in Refs.~\cite{Ostrowski2022a,Ostrowski2022b}, the Heisenberg limit $\mathfrak{C}=\Theta(\mu^4)$ was generalized to account for more deviation in the first- and second-order Glauber coherence functions. While these more relaxed constraints still required the beam to \textit{passably approximate} the properties of an ideal laser beam, 
they also permit deviations in $g_{\rm ps}^{(2)}(t)$ that are significant enough to encompass beams with maximal sub-Poissonianity in the long time limit; i.e., $Q_{t\to\infty}=-1$. Moreover, in those works, two families of sub-Poissonian laser models that saturate the Heisenberg limit scaling law $\mathfrak{C}=\Theta(\mu^4)$ were shown to exist, both of which can produce a beam with a higher coherence (i.e., a constant factor enhancement) over the highest performing Poissonian laser model currently known~\cite{Baker2020}. Both of these sub-Poissonian families of laser models, which are the subject of analytical treatment in this work, are now introduced.

\subsection{The $p,\lambda$-Family of Laser Models}

The first family of laser models that we consider can be described with a master equation in Lindblad form
\begin{align}\label{lambda_master}
    \begin{split}
        \frac{d\rho}{dt} & = \mathcal{L}^{(p,\lambda)}\rho = r\mathcal{N}\left(\mathcal{D}[\hat{G}^{(p,\lambda)}] + \mathcal{D}[\hat{L}^{(p,\lambda)}]\right)\rho.
    \end{split}
\end{align}
Here, $\rho$ is the state matrix of a non-degenerate, $D$-dimensional quantum system, which we refer to as a ``cavity", $\mathcal{N}$ is the photon flux, $\mathcal{D}[\hat{c}]\bullet:=\hat{c}\bullet\hat{c}^\dagger - (1/2)(\hat{c}^\dagger\hat{c}\bullet+\bullet\hat{c}^\dagger\hat{c})$ is the Lindblad superoperator, and $\hat{G}^{(p,\lambda)}$ and $\hat{L}^{(p,\lambda)}$ are respectively the operators characterizing the gain and loss of excitations to and from the cavity. The non-zero elements of these two operators are defined in the number basis of the cavity as
\begin{subequations}
    \begin{align}\label{gainop}
        \begin{split}
            \tilde{G}_n(p,\lambda) & := \bra{n}\hat{G}^{(p,\lambda)}\ket{n-1} = \left(\frac{\sin\left(\pi\frac{n+1}{D+1}\right)}{\sin\left(\pi\frac{n}{D+1}\right)}\right)^{\frac{p\lambda}{2}},
        \end{split}
    \end{align}
    \begin{align}\label{lossop}
        \begin{split}
            \tilde{L}_n(p,\lambda) & := \bra{n-1}\hat{L}^{(p,\lambda)}\ket{n} = \left(\frac{\sin\left(\pi\frac{n}{D+1}\right)}{\sin\left(\pi\frac{n+1}{D+1}\right)}\right)^{\frac{p(1-\lambda)}{2}},
        \end{split}
    \end{align}
\end{subequations}
where $0<n<D$.

Here, the parameter $p$ modifies the sharpness of the steady state cavity distribution $\rho_{\rm ss}$, as
\begin{align}\label{ss}
    \begin{split}
        \rho_{\rm ss} & \propto \sum_{n=0}^{D-1}\sin^p\left(\pi\frac{n+1}{D+1}\right)\ketbra{n}{n},
    \end{split}
\end{align}
which has mean
\begin{align}
    \mu={\rm Tr}\left(\hat{a}^\dagger\hat{a}\rho_{\rm ss}\right) = \frac{D-1}{2}.
\end{align}
By contrast, $\lambda\in\mathbb{R}$ modifies the overall ``flatness" of the coefficients of the gain and loss operators, while preserving the steady state distribution~(\ref{ss}). For example, $\lambda=0$ sets a flat gain with $\bra{n}\hat{G}^{(p,0)}\ket{n-1}=1$, while $\lambda=1$ instead sets a flat loss, with $\bra{n}\hat{L}^{(p,-1)}\ket{n-1}=1$. We note that the factor $r$ in Eq.~(\ref{lambda_master}) is defined to ensure $r{\rm Tr}\{\hat{L}^{(p,\lambda)\dagger}\hat{L}^{(p,\lambda)}\rho_{\rm ss}\} = 1$. This factor differs from unity by terms that are $O(D^{-2})$ when $\lambda\neq0$; this deviation from unity can be neglected as our analysis exclusively considers the limit $D\to\infty$.

Throughout this work, we assume that the term giving rise to the formation of the output beam in~(\ref{lambda_master}), $\mathcal{D}[\hat{L}^{(p,\lambda)}]\bullet$, arises from quantum vacuum white noise coupling~\cite{wiseman_milburn_2009}. This means that $\mathcal{N}$ can indeed be interpreted as the photon flux from the cavity and, more generally, that the Glauber coherence functions of the beam can be expressed in terms of the loss operator~\cite{Gardiner_1985}; e.g.,
\begin{subequations}
    \begin{align}\label{G1L}
        G^{(1)}(s+t,s) = \mathcal{N}{\rm Tr}\left\{\hat{L}^{(p,\lambda)\dagger}e^{\mathcal{L}^{(p,\lambda)}t}\left(\hat{L}^{(p,\lambda)}\rho_{ss}\right)\right\},
    \end{align}
    \begin{align}\label{g2L}
        g_{\rm ps}^{(2)}(t) = {\rm Tr}\left\{\hat{L}^{(p,\lambda)\dagger}\hat{L}^{(p,\lambda)}e^{\mathcal{L}^{(p,\lambda)}t}\left(\hat{L}^{(p,\lambda)}\rho_{ss}\hat{L}^{(p,\lambda)\dagger}\right)\right\}.
    \end{align}
\end{subequations}

For the $p,\lambda$-Family of laser models, numerical evaluations of $\mathfrak{C}$ have shown that, in the limit of $D\to\infty$, Heisenberg limited coherence is achieved for $p>3$~\cite{Ostrowski2022a}. Within this regime, it was found that $\mathfrak{C}^{(p,\lambda)} \sim \mathfrak{a}(p,\lambda)\mu^4$ for some positive function $\mathfrak{a}(p,\lambda)$. The maximum of that function is independent in the two variables $\lambda$ and $p$. That is, for fixed $\lambda$, $\mathfrak{C}^{(p,\lambda)}$ is maximal at $p\approx4.15$, and for fixed $p$, $\mathfrak{C}^{(p,\lambda)}$ is maximal at $\lambda=0.5$. Furthermore, the photon statistics were seen to depend only on the value of $\lambda$, not $p$. Specifically, the Mandel-$Q$ parameter for long counting intervals was shown to reduce to $Q_{t\to\infty}=-0.5$ (the minimum value attained for this family) with the optimal choice of $\lambda=0.5$. Contrary to common intuitions~\cite{Bergou_1989b,Wiseman_1991,Wiseman1993}, the value of $\mathfrak{C}^{(p,\lambda)}$ is greater the more sub-Poissonian the beam is, for $p$ fixed, and the maximum value for $\mathfrak{C}^{(p,\lambda)}$ is twice as large as that obtained in the Poissonian case ($Q_{t\to\infty}=0$, when $\lambda\in\{0,1\}$).

The open quantum dynamics described by the master equation~(\ref{lambda_master}) are notably exotic, and realizing the necessary interactions between the laser mode and its environment in practice would require a high degree of engineering. Detailed proposals for implementing these dynamics using near-term technology are beyond the scope of this article. However, we note that experimental precedent exists for realizing such interactions on the platform of circuit quantum-electrodynamics. In Ref.~\cite{Li_2024}, for example, a Lindblad term akin to that of the gain term in Eq.~(\ref{lambda_master}) was implemented using reservoir engineering techniques in a three-dimensional cavity dispersively coupled an ancillary Transmon qubit and readout resonator (it was also pointed out in that work that desirable non-linear loss could be realised with minor modifications to that setup). Although the original goal of that experiment was to stabilize Fock states within the microwave cavity, the underlying mechanisms could be repurposed to realize the laser dynamics described by the $p,\lambda$-family of laser models. Investigations are ongoing to assess the feasibility of this approach to implement beyond-SQL laser dynamics~\cite{Ostrowski_2025}.

\subsection{The $p,q$-Family of Laser Models}

As just presented, the $p,\lambda$-family of laser models can achieve a value of $Q_{t\to\infty}=-0.5$ at best, which corresponds to a $50$\% reduction in the photon noise within the beam over long counting intervals. It is, however, possible to define a family of laser models that exhibits Heisenberg-limited coherence, while achieving a $100$\% reduction in the beam photon noise (i.e., a Mandel-$Q$ parameter $Q_{t\to\infty}=-1$). This can be done by modifying the statistics by which excitations are gained by the laser, as is done with the $p,q$-family of laser models~\cite{Ostrowski2022a,Ostrowski2022b}. This is the second family considered in this work and it can be described with the following master equation 
\begin{align}\raisetag{2\baselineskip}\label{q_master}
    \begin{split}
        \frac{d\rho}{dt} & = \mathcal{L}^{(p,q)}\rho \\ & =\mathcal{N}\left(\mathcal{D}[\hat{G}^{(p,0)}] + \frac{q}{2}\mathcal{D}[\hat{G}^{(p,0)}]^2 + r'\mathcal{D}[\hat{L}^{(p,-q/2)}]\right)\rho.
    \end{split}
\end{align}
Here, the gain and loss operators are the same as those defined in Eqs.~(\ref{gainop})~and~(\ref{lossop}), where we have fixed a flat gain with $\lambda=0$, while making the substitution $\lambda\to-q/2$ with $q\in(-1,\infty)$ for the loss operator. In the limit $D\to\infty$ with $p$ finite, this system yields the same steady state~(\ref{ss}) as for the $p,\lambda$-family. Here, we require a factor $r'=1+O(D^{-2})$ multiplying the loss term to ensure $r'{\rm Tr}\{\hat{L}^{(p,-q/2)\dagger}\hat{L}^{(p,-q/2)}\rho_{\rm ss}\} = 1$, but like the $p,\lambda$-family of laser models, we can set $r'=1$ in the large-$D$ limit. The parameter $q$ may be interpreted as the Mandel-$Q$ parameter of the (Gaussian) pumping process. Setting $q=0$ means that the number of excitations gained by the system in a time interval $\Delta t$ follows a Poissonian distribution. Alternatively, setting $-1\leq q<0$ gives a sub-Poissonian pumping process, where the variance in the number of excitations gained in $\Delta t$ is below the mean. In the extreme case of $q\to-1$, this variance approaches zero. The loss term in Eq.~(\ref{q_master}) is also assumed to arise from quantum vacuum white noise coupling, such that analogous expressions to~(\ref{G1L})~and~(\ref{g2L}) are obtained for this family.

Eq.~(\ref{q_master}) is only an approximation to a regularly-pumped laser system, being a Markovian equation that is attempting to describe a non-Markovian (for $q\neq0$) pumping process. In Ref.~\cite{Ostrowski2022a}, this approximate model was derived by considering the repeated action of a completely-positive, trace-preserving (CPTP) map representing a single round of gain within the laser, and is valid in the regime where $1\ll\mathcal{N}\Delta t\ll \mu$ and $3<p\ll\mu^2$. This represents the case where the number of excitations gained in $\Delta t$ is large, for which $q$ characterizes the gain statistics, but the overall change in the laser state caused by those repeated gain events is small. Such a regime allows for the stepwise gain events to be approximated as a continuous process. The form of this equation may appear unfamiliar due to the squared superoperator $\mathcal{D}[\hat{G}^{(p,0)}]^2$, and because of this term, it is not generally expressible in Lindblad form. However, the use of equations such as this to describe sub-Poissonian pumping in lasers is standard (see Refs.~\cite{Haake_1989,Bergou_1989a,Bergou_1989b,Wiseman1993}, for example), and the results generated from this model are reasonable within the parameter regime we focus on in this article. Further details about this model and the regime of its validity can be found in Appendix~\ref{rp_apx}.

Much like the $p,\lambda$-family, numerical calculations have shown that Heisenberg-limited coherence is achieved for the $p,q$-family in the limit $D\to\infty$ and $p>3$, where $\mathfrak{C}^{(p,q)}\sim\mathfrak{b}(p,q)\mu^4$ for some positive function $\mathfrak{b}(p,q)$. The maximum of this function is likewise independent in the two variables $q$ and $p$, such that, for fixed $q$, $\mathfrak{C}^{(p,q)}$ is maximal at $p\approx4.15$, and for fixed $p$, $\mathfrak{C}^{(p,\lambda)}$ is maximal at $q=-1$. The photon statistics of the laser beam depend only on $q$, where the long-time Mandel-$Q$ parameter of the beam simply mirrors that of the pump, with $Q_{t\to \infty}=q$. This is a result that can be expected by following the intuitions from early works concerning regularly pumped laser models~\cite{Yamamoto1992}. However, the coherence correspondingly increased when the Mandel-$Q$ parameter of the beam decreased, in a similar manner to what was observed for the $p,\lambda$-family. Notably, when the absolute minimum of $Q_{t\to \infty}=-1$ was obtained for the $p,q$-family, the coherence was four times as large as it was in the Poissonian case ($Q_{t\to \infty}=0$, when $q=0$).

\subsection{The Linearized Regime}

This subsection specifies key details of the parameter regime for the two families of laser models, which we refer to as the \textit{linearized regime}, and restrict our attention to for the remainder of this work. While it has been noted that optimal coherence is achieved for these families in the limit $\mu\to\infty$ with the relatively low value of $p\approx4.15$, the highly nonlinear forms of the operators characterizing the laser models for this choice of parameter value make analytical treatments intractable. Instead, in the linearized regime we consider moderately large values of $p$, such that the matrix elements of the operators $\hat{G}^{(p,\lambda)\dagger}\hat{G}^{(p,\lambda)}$ and $\hat{L}^{(p,\lambda)\dagger}\hat{L}^{(p,\lambda)}$ in the number basis of the cavity can be linearized. This will lead to simplified dynamics and hence make an analytical treatment possible, while preserving Heisenberg-limited scaling for $\mathfrak{C}$. The consideration of this regime requires two criteria to be met. First, we require that $1\ll p\ll\mu^2$, and second we require $\partial p/\partial\mu = 0$. The first criterion allows for the linearization to be made, while the second ensures that Heisenberg-limited scaling for $\mathfrak{C}$ is preserved for each family of laser models.

We explicitly show this linearization procedure for the operator $\hat{L}^{(p,\lambda)\dagger}\hat{L}^{(p,\lambda)}$, while an analogous procedure follows for the operator $\hat{G}^{(p,\lambda)\dagger}\hat{G}^{(p,\lambda)}$. This begins by writing down its non-zero coefficients in the number basis of the cavity
\begin{align}\label{Lsq}
    \begin{split}
        \left(\tilde{L}_n(p,\lambda)\right)^2 & = \bra{n}\hat{L}^{(p,\lambda)\dagger}\hat{L}^{(p,\lambda)}\ket{n} \\ & = \left(\frac{\sin\left(\pi\frac{n}{D+1}\right)}{\sin\left(\pi\frac{n+1}{D+1}\right)}\right)^{p(1-\lambda)}.
    \end{split}
\end{align}
Upon applying the elementary trigonometric identity $\sin(A\pm B) = \sin(A)\cos(B)\pm\cos(A)\sin(B)$ and taking $D\gg1$, Eq.~(\ref{Lsq}) can approximated as
\begin{align}\label{Lsq2}
    \left(\tilde{L}_n(p,\lambda)\right)^2 \approx \left[1 - \frac{\pi}{D+1}\cot\left(\pi\frac{n+1}{D+1}\right)\right]^{p(1-\lambda)}.
\end{align}
From here, it can be recognized that by considering moderately large values of $p$, the available cavity states would be well-localized about the mid-point $\mu=(D-1)/2$ of the interval for $n$. Thus, by considering the regime where $1\ll p\ll \mu^2$, we express the cotangent function in~(\ref{Lsq2}) as a Taylor series expanded about this mid-point, and further apply a binomial expansion to approximate the exponent $p(1-\lambda)$. Keeping terms to leading order in $p/\mu^2$ gives the desired linearized expression
\begin{align}\label{loss_lin}
    \left(\tilde{L}_n(p,\lambda)\right)^2 \approx 1 + \frac{\pi^2p(1-\lambda)}{(D+1)^2}\left(n - \frac{(D-1)}{2}\right).
\end{align}
Following similar arguments, the linearized expression for the coefficients of $\hat{G}^{(p,\lambda)\dagger}\hat{G}^{(p,\lambda)}$ are
\begin{align}
    \begin{split}\label{gain_lin}
        \left(\tilde{G}_{n+1}(p,\lambda)\right)^2 & := \bra{n}\hat{G}^{(p,\lambda)\dagger}\hat{G}^{(p,\lambda)}\ket{n} \\ & \approx 1 -\frac{\pi^2p\lambda}{(D+1)^2}\left(n - \frac{(D-1)}{2}\right).
    \end{split}
\end{align}
These approximations are visualized in Fig.~\ref{fig:linear_visual}.

\begin{figure}[H]
\includegraphics[width=1.0\columnwidth]{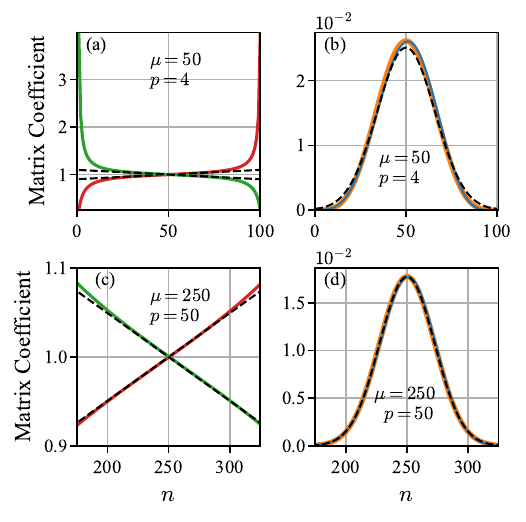}
\caption{\label{fig:linear_visual} (a): Diagonal matrix coefficients of the operators $\hat{L}^{(p,\lambda)\dagger}\hat{L}^{(p,\lambda)}$ (red) and $\hat{G}^{(p,\lambda)\dagger}\hat{G}^{(p,\lambda)}$ (green) in the number basis of the laser cavity for $\mu=50$, $p = 4$, and $\lambda = 0.5$. Linearized versions of these matrix elements, as defined in Eqs.~(\ref{Lsq2})~and~(\ref{gain_lin}), are displayed as dashed black lines. (b): Diagonal elements of the cavity steady state for the $p,\lambda$-family (orange) and $p,q$-family (blue) of laser models in the number basis. The Gaussian approximation of these distributions, as given in Eq.~(\ref{ss_lin}), is given by the dashed black curve. Panels (c) and (d) depict the same as that in (a) and (b), respectively, but for parameter value choices that more closely satisfy the linearized regime. Note the restricted range of $n$ plotted in these latter cases.}
\end{figure}

In the linearized regime, the cavity distribution for both families of laser models are approximately Gaussian. That is, for both families, we have
\begin{align}\label{ss_lin}
    \begin{split}
        & \rho_{\rm ss} \approx \sum_{n=0}^{D-1}\sqrt{\frac{k}{2\pi}}\exp\left[\frac{-k(n-\mu)^2}{2}\right]\ketbra{n}{n}; \quad k:=\frac{\pi^2p}{4\mu^2},
    \end{split}
\end{align}
regardless of the choices of $\lambda$ or $q$ for the respective families. Eq.~(\ref{ss_lin}) is compared with the steady state cavity distributions obtained numerically from Eq.~(\ref{lambda_master})~and~(\ref{q_master}) in Fig.~\ref{fig:linear_visual}.

With the linearized expressions for the gain and loss operators, we define two master equations. These equations approximately describe the dynamics of the $p,\lambda$- and $p,q$-families of laser models within the linearized regime; respectively these are
\begin{subequations}
    \begin{align}\label{lam_mast_lin}
        \begin{split}
            \frac{\dot{\rho}_{m,n}}{\mathcal{N}} = & G_{m}(k,\lambda)G_{n}(k,\lambda)\rho_{m-1,n-1} \\ & -\frac{1}{2}\big([G_{m+1}(k,\lambda)]^2 + [G_{n+1}(k,\lambda)]^2\big)\rho_{m,n} \\ & + L_{m+1}(k,\lambda)L_{n+1}(k,\lambda)\rho_{m+1,n+1} \\ & - \frac{1}{2}\left([L_{m}(k,\lambda)]^2 + [L_{n}(k,\lambda)]^2\right)\rho_{m,n}.
        \end{split}
    \end{align}
    \begin{align}\label{q_mast_lin}
        \begin{split}
            \frac{\dot{\rho}_{m,n}}{\mathcal{N}} = & \frac{q}{2}\rho_{m-2,n-2} + (1-q)\rho_{m-1,n-1} - (1-q/2)\rho_{m,n} \\ & + L_{m+1}(k,-q/2)L_{n+1}(k,-q/2)\rho_{m+1,n+1} \\ & - \frac{1}{2}\left([L_{m}(k,-q/2)]^2 + [L_{n}(k,-q/2)]^2\right)\rho_{m,n}.
        \end{split}
    \end{align}
\end{subequations}
Here, $\rho_{m,n} = \bra{m}\rho\ket{n}$, and we have defined the functions
\begin{subequations}
    \begin{align}\label{loss_coeff_linear_general}
        L_{n}(k,x) := \sqrt{1 + k(1-x)(n-\mu)},
    \end{align}
    \begin{align}
        G_{n+1}(k,x) := \sqrt{1 - kx(n-\mu)},
    \end{align}
\end{subequations}
which have the same form as Eqs.~(\ref{loss_lin})~and~(\ref{gain_lin}), but are expressed in terms of $k=\pi^2 p/4\mu^2$ and $\mu = (D-1)/2$. In terms of these parameters, $1/\mu^2\ll k\ll 1$ (which, ignoring a factor of $\pi^2/4$, essentially reproduces $1\ll p\ll\mu^2$) defines the regime for which Eqs.~(\ref{lam_mast_lin})~and~(\ref{q_mast_lin}) respectively approximate the original master equations~(\ref{lambda_master})~and~(\ref{q_master}). 

\section{Number Dynamics and Beam Photon Statistics}
\label{number}

In this section, we analyze the photon statistics of the beam produced by our two families of Heisenberg-limited laser models in the linearized regime. We confirm that sub-Poissonian beam photon statistics arise within both families by computing their long-time Mandel-$Q$ parameter, as well as the intensity noise spectrum. Both of these quantities follow from an analysis of the second order correlation function $g^{(2)}_{\rm ps}(t)$.

\subsection{Incoherent Cavity Dynamics}

For the purposes of this analysis, it is sufficient to consider the population dynamics of both families. Using the linearized Equations~(\ref{lam_mast_lin})~and~(\ref{q_mast_lin}), these dynamics are approximated by the following equations
\begin{subequations}
    \begin{align}\label{lam_num}
        \begin{split}
            \frac{\dot{\rho}_{n,n}}{\mathcal{N}} =  & \left[G_n(k,\lambda)\right]^2\rho_{n-1,n-1} - \left[G_{n+1}(k,\lambda)\right]^2\rho_{n,n} \\ & + \left[L_{n+1}(k,\lambda)\right]^2\rho_{n+1,n+1} - \left[L_n(k,\lambda)\right]^2\rho_{n,n},
        \end{split}
    \end{align}
    \begin{align}\raisetag{\baselineskip}\label{q_num}
        \begin{split}
            \frac{\dot{\rho}_{n,n}}{\mathcal{N}} = & (1-q)\rho_{n-1,n-1}-(1-q/2)\rho_{n,n}+\frac{q}{2}\rho_{n-2,n-2} \\ & + \left[L_{n+1}(k,-q/2)\right]^2\rho_{n+1,n+1} \\ & - \left[L_{n}(k,-q/2)\right]^2\rho_{n,n},
        \end{split}
    \end{align}
\end{subequations}
respectively for the $p,\lambda$-  and $p,q$-families.

Given that $\mu$ is large, it is possible to take the continuum limit, where we treat the coefficients of the gain and loss operators, and the cavity distribution as continuous functions of $n$. That is, $[G_n(k,x)]^2\to[G(n,k,x)]^2$, $[L_n(k,x)]^2\to[L(n,k,x)]^2$ and $\rho_{n,n}\to p(n,t)$. Then performing a second-order Taylor series expansion of the terms that include $p(n\pm1,t)$ and $p(n-2,t)$ in the corresponding equations, we can convert Eqs.~(\ref{lam_num})~and~(\ref{q_num}) into Fokker-Planck equations in terms of continuous probability distributions. Hence, for the $p,\lambda$-family we have
\begin{align}\raisetag{\baselineskip}\label{FPlam}
    \begin{split}
        \frac{\dot{p}(n,t)}{\mathcal{N}} & \approx \frac{\partial}{\partial n}\big(\big{\{}[L(n,k,\lambda)]^2 \\ & \hspace{1.5cm} - [G(n+1,k,\lambda)]^2\big{\}}p(n,t)\big) \\ & \hspace{.4cm} + \frac{1}{2}\frac{\partial^2}{\partial n^2}\big(\big{\{}[L(n,k,\lambda)]^2 \\ & \hspace{2.25cm} + [G(n+1,k,\lambda)]^2\big{\}}p(n,t)\big) \\ & \approx\frac{\partial}{\partial n}\left[k(n-\mu)p(n,t)\right] + \frac{\partial^2}{\partial n^2}p(n,t),
    \end{split}
\end{align}
wherein moving to the final line, we have kept the leading order term in $k$ within the second derivative. Similarly, for the $p,q$-family we have
\begin{align}\raisetag{\baselineskip}\label{FPq}
    \begin{split}
        \frac{\dot{p}(n,t)}{\mathcal{N}} & \approx \frac{\partial}{\partial n}\big(\big{\{}[L(n,k,-q/2)]^2 - 1\big{\}}p(n,t)\big) \\ & \hspace{.4cm} + \frac{1}{2}\frac{\partial^2}{\partial n^2}\big(\big{\{}1+q + [L(n,k,-q/2)]^2\big{\}}p(n,t)\big) \\ & \approx\frac{\partial}{\partial n}\left[k(1+q/2)(n-\mu)p(n,t)\right] \\ & \hspace{0.4cm} + (1+q/2)\frac{\partial^2}{\partial n^2}p(n,t).
    \end{split}
\end{align}

The last line of Eqs.~(\ref{FPlam})~and~(\ref{FPq}) are both examples of an Ornstein-Uhlenbck process. These dynamics are commonly seen throughout quantum optical systems~\cite{Walls_Milburn_2008}, with the number dynamics of a regularly pumped laser with linear output coupling being a quintessential example~\cite{Haake_1989}. Let us consider the general form of these equations, with drift and diffusion coefficients $\xi$ and $\chi$
\begin{align}
    \dot p(n,t) = \frac{\partial}{\partial n}\left[\xi(n-\mu)p(n,t)\right] + \chi\frac{\partial^2}{\partial n^2}p(n,t).
\end{align}
The solution to this equation is Gaussian, and can be readily obtained using the method of characteristics; for $t>0$, we find
\begin{align}\label{solution_cts}
    \begin{split}
        p(&m,t|n,0) = \\ & \frac{1}{\sqrt{2\pi\sigma(t)^2}}\exp{\frac{-\left[m-\mu - (n-\mu)e^{-\xi t}\right]^2}{2\sigma(t)^2}}.
    \end{split}
\end{align}
Here, we have the variance $\sigma(t)^2=\chi(1-e^{-2\xi t})/\xi$ and have also considered an initially sharp distribution $p(m,0|n,0)=\delta(m-n)$. Eq.~(\ref{solution_cts}) relaxes to a steady state with a mean $\mu$ and a variance $\sigma(t\to\infty)^2 = \chi/\xi$. Upon comparing the solutions that pertain to the two families of laser models, the key difference is the photon correlation time, which is $1/\xi\to 1/\mathcal{N}k$ for the $p,\lambda$-family and $1/\xi\to 1/\mathcal{N}k(1+q/2)$ for the $p,q$-family.

\subsection{The Second-Order Glauber Coherence Function}\label{g2_calculation}

Based on the general solution~(\ref{solution_cts}) for the incoherent dynamics of the two families of laser models, we now compute their second-order coherence functions, given by Eq.~(\ref{g2ps}). Considering the $p,\lambda$-family, we have the exact expression
\begin{align}
    \begin{split}\raisetag{3\baselineskip}\label{g2lam_exact}
        g^{(2)}_{{\rm ps},\lambda}(t) & = \sum_m[\tilde{L}_m(p,\lambda)]^2 \\ & \hspace{1cm} \bra{m}e^{\mathcal{L}^{(p,\lambda)}t}[\hat{L}^{(p,\lambda)}\rho_{\rm ss}\hat{L}^{(p,\lambda)\dagger}]\ket{m} \\ & = \sum_{m,n}[\tilde{L}_m(p,\lambda)]^2[\tilde{L}_{n+1}(p,\lambda)]^2 \\ & \hspace{0.75cm} \times\bra{n+1}\rho_{\rm ss}\ket{n+1}\bra{m}e^{\mathcal{L}^{(p,\lambda)}t}\left(\ketbra{n}{n}\right)\ket{m},
    \end{split}
\end{align}
and similarly, for the $p,q$-family, we have
\begin{align}
    \begin{split}\raisetag{2\baselineskip}\label{g2q_exact}
        g^{(2)}_{{\rm ps},q}(t) = \sum_{m,n} & [\tilde{L}_m(p,-q/2)]^2[\tilde{L}_{n+1}(p,-q/2)]^2 \\ & \hspace{-0.25cm} \times\bra{n+1}\rho_{\rm ss}\ket{n+1}\bra{m}e^{\mathcal{L}^{(p,q)}t}\left(\ketbra{n}{n}\right)\ket{m}.
    \end{split}
\end{align}

Working in the linearized regime and taking the continuum limit once again, Eqs.~(\ref{g2lam_exact})~and~(\ref{g2q_exact}) can both be expressed using the following general form
\begin{align}\label{g2_generalcts}
    \begin{split}\raisetag{1\baselineskip}
        g^{(2)}_{\rm ps}(t) \approx \int_0^\infty\int_0^\infty dm\,dn & [L(m,k,x)]^2[L(n+1, k, x, \mu)]^2 \\ & \times p(n+1)p(m,t|n,0),
    \end{split}
\end{align}
where $L(m,k,x)$ is the continuum form of Eq.~(\ref{loss_coeff_linear_general}), $p(m,t|n,0)$ is given in Eq.~(\ref{solution_cts}), and $p(m) = p(m,t\to\infty|n,0)$. Eq.~(\ref{g2_generalcts}) can be evaluated; we find
\begin{align}\label{g2_generalcts2}
    \begin{split}
        g^{(2)}_{{\rm ps}}(t) \approx 1 - k(1-x)\left(1 - k(1-x)\frac{\chi}{\xi}\right)\exp{-\xi t}.
    \end{split}
\end{align}
Finally, making the substitutions $x \to \lambda$, $\chi \to \mathcal{N}$, and $\xi \to \mathcal{N}k$ for the $p,\lambda$-family, and $x \to -q/2$, $\chi \to \mathcal{N}(1+q/2)$, and $\xi \to \mathcal{N}k(1+q/2)$ for the $p,q$-family, we arrive at the following respective expressions for their second-order Glauber coherence functions
\begin{subequations}\label{g2_both_final}
    \begin{align}\label{g2lam_final}
        g^{(2)}_{{\rm ps},\lambda}(t) \approx 1 + \lambda(\lambda - 1)k\exp{-\mathcal{N}k|t|},
    \end{align}
    \begin{align}\label{g2q_final}
        g^{(2)}_{{\rm ps},q}(t) \approx 1 + \frac{q(1+q/2)}{2}k\exp{-\mathcal{N}k(1+q/2)|t|}.
    \end{align}
\end{subequations}

From this analysis, we can see that at zero time delay ($t=0$) the Glauber coherence functions are minimized with $\lambda = 0.5$ for the $p,\lambda$-family, and $q=-1.0$ for the $p,q$-family. In both of these instances, we find $g^{(2)}_{\rm ps}(0) =1- k/4$, while the decay rate of the photon correlations will be twice as large for the $p,\lambda$-family compared to the $p,q$-family (assuming a fixed $\mathcal{N}$). 

In the linearized regime, we require that $\partial p/\partial \mu = 0$ in order to maintain Heisenberg-limited coherence for both families of laser models (details on this point are provided in Sec~\ref{phase}, below). This requirement implies that $g^{(2)}_{\rm ps}(0)-1=\Theta(1/\mu^2)$ (the $\Theta$-notation here means that $g^{(2)}_{\rm ps}(0)-1$ scales as $1/\mu^2$), and the decay rate of the photon correlations is $\Theta(\mathcal{N}/\mu^2)$. The scaling of these quantities can be modified by relinquishing the requirement for Heisenberg-limited coherence but remaining within the regime for which Eqs.~(\ref{g2lam_final}) and (\ref{g2q_final}) are valid. That is, we may set $p=\Theta(\mu^{2-\epsilon})$ where $\epsilon\in(0,2]$. In this case, $g^{(2)}_{\rm ps}(0)$ would be below unity by a term that is $\Theta(1/\mu^\epsilon)$, while the decay rate of the photon correlations would be $\Theta(\mathcal{N}/\mu^{\epsilon})$.

\subsection{Spectrum of Intensity Correlations and Beam Sub-Poissonianity}\label{sec:specs_and_sub_p}

\begin{figure}[ht]
\includegraphics[width=1.0\columnwidth]{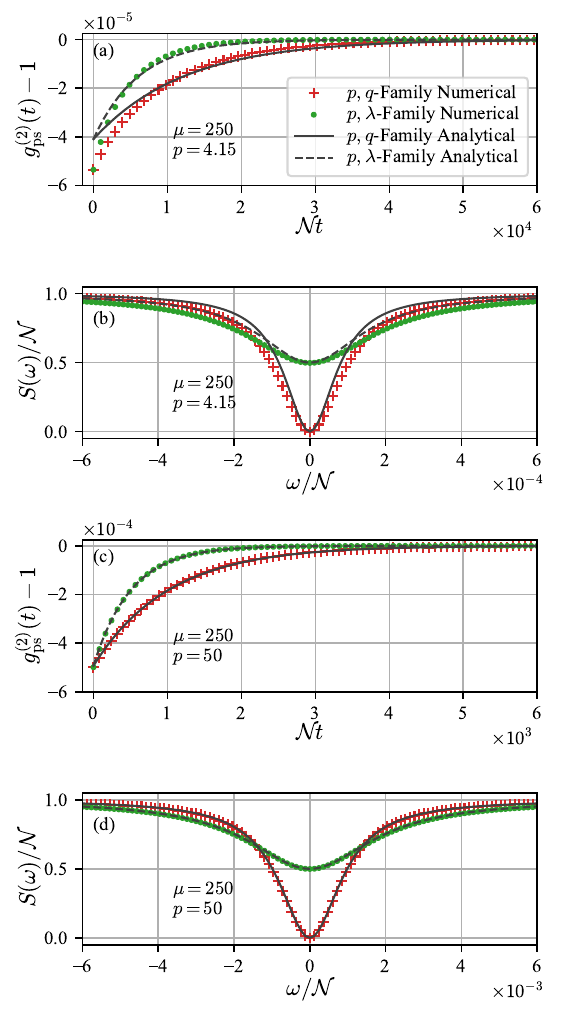}
\caption{\label{fig:G2_and_Spec} (a): Numerical and analytical calculations of the second-order correlation functions~(\ref{g2ps}) for the $p,\lambda$-family and $p,q$-family of laser models. Here, we consider the maximally sub-Poissonian case (i.e., $\lambda=0.5$ and $q=-1$, respectively for the two families), and have set $\mu = 250$ and $p=4.15$. (b) Corresponding intensity noise spectra computed from the data in (a). Panels (c) and (d) are the same as that shown in (a) and (b), respectively, but using $p=50$, which is clearly in the linearized regime.}
\end{figure}

With expressions for $g_{\rm ps}^{(2)}(t)$, we can readily evaluate Eqs.~(\ref{Q_t})~and~(\ref{spec intense}) to obtain the Mandel-$Q$ parameters and the intensity noise spectra for the two families of laser models. We find
\begin{subequations}
    \begin{align}\label{qlam_final}
        Q_{t}^{(\lambda)} = 2\lambda(\lambda - 1)\left[1+\frac{1}{\mathcal{N}k t}\left(\exp{-\mathcal{N}kt} - 1\right)\right],
    \end{align}
    \begin{align}\label{SIlam_final}
        S_{I}^{(\lambda)}(\omega) = \mathcal{N}\left(1+\frac{2k^2\lambda(\lambda-1)}{k^2+\left(\omega/\mathcal{N}\right)^2}\right),
    \end{align}
\end{subequations}
for the $p,\lambda$-family and
\begin{subequations}
    \begin{align}\label{qq_final}
        Q_{t}^{(q)} = q\left[1+\frac{\left(\exp{-\mathcal{N}k(1+q/2)t} - 1\right)}{\mathcal{N}k(1+q/2)t}\right],
    \end{align}
    \begin{align}\label{SIq_final}
        S_{I}^{(q)}(\omega) = \mathcal{N}\left(1+\frac{qk^2(1+q/2)^2}{[k(1+q/2)]^2+\left(\omega/\mathcal{N}\right)^2}\right),
    \end{align}
\end{subequations}
for the $p,q$-family.

In Fig~\ref{fig:G2_and_Spec}, we verify the validity of our analytical results, where we compare Eqs.~(\ref{g2lam_final})~and~(\ref{g2q_final}) (the second-order Glauber coherence functions), and Eqs.~(\ref{SIlam_final})~and~(\ref{SIq_final}) (the intensity noise spectra) with their exact counterparts, obtained from numerical calculations of the master equations~(\ref{lambda_master})~and~(\ref{q_master}). Here, we have chosen parameter values that yield the highest degree of sub-Poissonianity in both families of laser models and are comparing two scenarios; one for which $p=4.15$ and one for which $p=50$. 

For relatively low values of $p$, we can see clear deviations between the numerical and analytical results, but, for higher values of $p$, in the linearized regime, we see essentially perfect agreement. Note, however, that the numerical and analytical spectra match at $\omega=0$, regardless of the choice of $p$. This implies that the long-time ($t\to\infty$) Mandel-$Q$ parameter that is presented in Eqs.~(\ref{qlam_final})~and~(\ref{qq_final}) are valid outside of the linearized regime for which they have been derived. Indeed, the formulas $Q_{t\to\infty}^{(\lambda)} = 2\lambda(\lambda-1)$ and $Q_{t\to\infty}^{(q)} = q$ perfectly match with the numerical evaluations of this quantity that were carried out in Refs.~\cite{Ostrowski2022a,Ostrowski2022b}. Another notable aspect of the spectra shown in Fig~\ref{fig:G2_and_Spec} is that, while the depth of the spectrum is two times as large for the $p,q$-family compared to the $p,\lambda$-family, the spectral width of the noise reduction [specifically, the full width at half minimum of the inverted Lorentzian functions in Eqs.~(\ref{SIlam_final})~and~(\ref{SIq_final})] is smaller for the $p,q$-family by a factor of two.

\section{Phase Dynamics and Beam Coherence}
\label{phase}

In this section, we develop a more complete picture of the laser dynamics in the linearized regime. We show that the internal cavity dynamics for our two families approximates pure phase diffusion, in analogy to what is seen in standard laser systems~\cite{Carmichael_2010,Noh_2008}. However, unlike the coherent state in a standard laser, the pure cavity state undergoing phase diffusion in the Heisenberg-limited laser families has a much narrower phase distribution, and hence a much broader number distribution (i.e., it is highly phase squeezed). Note that while this internal state is phase squeezed, this is not the case for the output field because of the specific choice of coupling between between the laser cavity and its beam. Indeed, as was shown from our analysis in Section~\ref{sec:specs_and_sub_p}, the beam can instead be number squeezed.

Following the analysis of the internal cavity dynamics for our two families of laser models, we compute their first-order Glauber coherence functions, which yield a number of insights. Firstly, it allows for an analytical formula for the coherence of the beams to be obtained, where a direct connection can be drawn between this and the best-matching phase diffusion rate from the internal dynamics of the cavities. Secondly, while the long-time behaviour of $G^{(1)}(s,t)$ closely follows an exponential decay (as is the case for an ideal laser beam), we find small deviations from this at short times. These deviations are due to the departure from Poissonian photon number statistics in the beam, and can accordingly be expressed in terms of $g^{(2)}_{\rm ps}(t)$.

\subsection{Physically Realizable Ensembles and Pure Phase Diffusion}

For our analysis, we utilize the concept of PR ensembles~\cite{Wiseman1998}, framed within the context of pure phase diffusion. A PR ensemble refers to a stationary pure state ensemble of an open quantum system characterized by a distinct property: If the system is prepared in a pure state drawn from that ensemble, and the environment state is also initially pure, then it is possible to determine the system to be in a pure state of that ensemble \textit{at any later time} by monitoring the system's environment. In the linearized regime, we will show that our families of laser models can be accurately characterized by a set of pure states $\{\ket{\psi^\phi}\}$, with uniform measure over $\phi$, which constitute a PR ensemble. Furthermore, we demonstrate that the dynamics of $\phi$ in this regime correspond to pure diffusion. Before presenting these results, we provide a brief review of the relevant theoretical framework.

Consider an open quantum system, where the dynamics of which can be written in the form of a Markovian master equation $\dot\rho = \mathcal{L}\rho$, with $\mathcal{L}$ representing the Liouvillian superoperator. Let us also suppose that a unique stationary mixed state $\rho_{\rm ss} = \rho(t\to\infty)$ exists for this system, with $\mathcal{L}\rho_{\rm ss} = 0$. At any given time, this state of the system can be expressed as a convex combination of pure states in infinitely many ways. Each decomposition consists of ordered pairs of rank-one projectors onto pure states $\Pi_n$ and probabilities $p_n$, such that $\rho_{\rm ss} = \sum^{N-1}_{n=0}p_n\Pi_n$. Only some decompositions, however, are PR. For these decompositions, it possible to devise a way to monitor the system's environment---without affecting the average evolution of the system---such that the system will only ever be in one of the states $\Pi_n$, and the proportion of time that the system will spend in that state is equal to $p_n$ in the long time limit. The pertinent result from Ref.~\cite{Wiseman2001} (see also Ref.~\cite{Karasik_2011}) is that a \textit{discrete ensemble} $\{(\Pi_n,p_n)\}_{n=0}^{N-1}$ representing $\rho$ is PR iff there exist rates $\kappa_{mn}\geq0$ such that
\begin{align}\label{PR_diff_condition}
    \forall\,n, \quad\quad \mathcal{L}\Pi_n = \sum_{m=0}^{N-1}\kappa_{mn}\left(\Pi_m - \Pi_n\right).
\end{align}

The ensemble of pure states $\left\{(\varrho^\phi,d\phi/2\pi)\right\}$ that we consider is instead defined with respect to the \textit{continuous} phase variable $\phi\in[0,2\pi)$ with uniform measure. Specifically, we make the ansatz 
\begin{align}\label{pure_cavity_state}
    \begin{split}
        & \varrho^\phi = |\psi^\phi\rangle\langle\psi^\phi|, \quad {\rm with} \quad
        \ket{\psi^\phi} := \sum_{n = 0}^{D-1}\sqrt{\rho_n}e^{i\phi n}\ket{n},
    \end{split}
\end{align}
and $\rho_n:=\bra{n}\rho_{\rm ss}\ket{n}$ defines the coefficients the steady state, where $\mathcal{L}\rho_{\rm ss}=0$. It should be clear that that $\varrho^\phi$ satisfies the constraint that the uniformly weighted ensemble reproduces the incoherent steady state; $\rho_{\rm ss} = \int_0^{2\pi}\varrho^\phi d\phi/2\pi$. Additionally, we wish to consider the situation where the dynamics of the ensemble are that of pure phase diffusion. This is where an initial state $\varrho^\phi$ would evolve in a time $t$ into a Gaussian mixture of states $\nu$ from the same family; i.e., $\nu = \int_{\mathbb{R}} d\varphi\,g(\varphi;\phi,\ell t)\varrho^{\varphi\,{\rm mod}\,2\pi}$, where $g(\varphi;\phi,\ell t)$ denotes a Gaussian in $\varphi$ with mean $\phi$ and variance $\ell t$ ($\ell$ representing the diffusion rate). Under these conditions, states with the general form $\varrho^{\phi} = \sum_{m,n}e^{i\phi(m-n)}|\rho_{mn}|\ketbra{m}{n}$ [which includes the ansatz~(\ref{pure_cavity_state})] will evolve according to the diffusion equation
\begin{align}\label{PR_diff_phase}
    \forall\,\phi \quad\quad \mathcal{L}\varrho^\phi = \left.\frac{\ell}{2}\frac{d^2\varrho^\varphi}{d\varphi^2}\right|_{\varphi = \phi}.
\end{align}

We will show that Eq.~(\ref{PR_diff_phase}) represents an analogous condition to Eq.~(\ref{PR_diff_condition}) for the continuous ensemble $\left\{(\varrho^\phi,d\phi/2\pi)\right\}$ to be PR. This can be done by assigning $\Pi_n=\varrho^{\phi_{n}}$, where $\phi_n = 2\pi n/N$, and choosing the rates $\kappa_{mn}$ according to
\begin{align}
    \begin{split}
        & \kappa_{mn} = \\ &
\begin{cases}
    \frac{\ell}{2}\left(\frac{N}{2\pi}\right)^2(\delta_{m-1,0} + \delta_{m-(N-1),0}) & \hspace{0.7cm} \text{if } n=0, \\
    \frac{\ell}{2}\left(\frac{N}{2\pi}\right)^2(\delta_{m-1,n} + \delta_{m+1,n}) & \hspace{-0.68cm} \text{if } 0< n < N-1, \\
    \frac{\ell}{2}\left(\frac{N}{2\pi}\right)^2(\delta_{m+(N-1),N-1} + \delta_{m+1,N-1}) & \text{if } n = N-1.
\end{cases}
    \end{split}
\end{align}
Substituting these values for $\kappa_{mn}$ into Eq.~(\ref{PR_diff_condition}) gives
\begin{align}\label{mid_step}
    \begin{split}
        \forall\,n, \quad \mathcal{L}\varrho^{\phi_n} = \frac{\ell}{2}\left(\frac{N}{2\pi}\right)^2\bigg( & \varrho^{(\phi_n-2\pi/N)\,{\rm mod}\,2\pi} - 2\varrho^{\phi_n} \\ & + \varrho^{(\phi_n+2\pi/N)\,{\rm mod}\,2\pi}\bigg),
    \end{split}
\end{align}
and then taking the limit $N\to\infty$ yields Eq.~(\ref{PR_diff_phase}). This reveals that Eq.~(\ref{PR_diff_phase}) is a continuous counterpart to a version of the discrete equation~(\ref{PR_diff_condition}). Moreover, given that it is also of the form of a diffusion equation, it can be considered to represent the condition for the ensemble $\left\{(\varrho^\phi,d\phi/2\pi)\right\}$, with the specific ansatz~(\ref{pure_cavity_state}), to be PR \textit{and} describable with dynamics of pure phase diffusion of $\phi$.

\subsection{Matching Cavity Dynamics with Pure Phase Diffusion}\label{match_cav_diff}

Our goal is to show that the dynamics given by the $p,\lambda$- and $p,q$-families of laser models satisfy (to an arbitrarily good approximation) Eq.~(\ref{PR_diff_phase}) within the linearized regime. If this can be done, then we can confidently assert that the ensemble $\left\{(\varrho^\phi,d\phi/2\pi)\right\}$ is PR, and that the dynamics of $\varrho^\phi$ is given by pure phase diffusion within that regime. To do this, we first note that the RHS of Eq.~(\ref{PR_diff_phase}) can be expressed as
\begin{align}\label{pure_diff_deravitive_finite}
    \left.\frac{d^2\varrho^\varphi}{d\varphi^2}\right|_{\varphi=\phi} = - \sum_{m,n=0}^{D-1}(m-n)^2e^{i\phi(m-n)}\sqrt{\rho_m\rho_n}\ketbra{m}{n}.
\end{align}
Then looking at the LHS of~(\ref{PR_diff_phase}), we make the respective substitutions $\mathcal{L}\to\mathcal{L}^{(p,\lambda)}$ and $\mathcal{L}\to\mathcal{L}^{(p,q)}$ for the two families of laser models. By focusing on the case where $1\ll p\ll\mu^2$, we derive the following expressions to second order in $k$,
\begin{subequations}
    \begin{align}\label{diff_lam}
        \begin{split}
            \mathcal{L}^{(p,\lambda)}\varrho^{\phi} \approx & -\frac{1}{2} \left\{(2\lambda^2-2\lambda+1)\frac{\mathcal{N}\pi^4p^2}{64\mu^4}\right\} \\ & \times\sum_{m,n}(m-n)^2 e^{i\phi(m-n)}\sqrt{\rho_m\rho_n}\ketbra{m}{n},
        \end{split}
    \end{align}
    \begin{align}\label{diff_q}
        \begin{split}
            \mathcal{L}^{(p,q)}\varrho^{\phi} \approx & -\frac{1}{2} \left\{(1+q/2)^2\frac{\mathcal{N}\pi^4p^2}{64\mu^4}\right\} \\ & \times\sum_{m,n}(m-n)^2 e^{i\phi(m-n)}\sqrt{\rho_m\rho_n}\ketbra{m}{n}.
        \end{split}
    \end{align}
\end{subequations}
Details of these calculations are provided in Appendix~\ref{apendB}. 

Eqs.~(\ref{diff_lam})~and~(\ref{diff_q}) are of the same form as Eq~(\ref{pure_diff_deravitive_finite}). This implies that the dynamics of the two families approach pure phase diffusion for $1\ll p\ll\mu^2$. The diffusion rates are given by the expressions within the curly braces of~(\ref{diff_lam})~and~(\ref{diff_q}),
\begin{subequations}\label{ell_both}
    \begin{align}
        \ell^{(p,\lambda)} := (2\lambda^2-2\lambda + 1)\frac{\mathcal{N}\pi^4p^2}{64\mu^4},
    \end{align}
    \begin{align}
        \ell^{(p,q)} := (1+q/2)^2\frac{\mathcal{N}\pi^4p^2}{64\mu^4}.
    \end{align}
\end{subequations}

In Fig.~\ref{fig:diff_verify}, we show quantitative results regarding the validity of these phase diffusion approximations by demonstrating the extent to which Eq.~(\ref{PR_diff_phase}) holds. There, we plot the relative distance between the matrices $\mathcal{L}\varrho^{\phi=0}$ and $\left.(\ell/2)d^2\varrho^\varphi/d\varphi^2\right|_{\varphi=0}$ for our families of laser models, verifying that there exists an $\ell$ such that these two matrices closely match in the linearized regime. Explicitly, we give strong evidence for the claim that
\begin{align}\label{to_verify}
    \begin{split}
        \lim_{\mu\to\infty}R\left(\left.\frac{\ell}{2}\frac{d^2\varrho^\varphi}{d\varphi^2}\right|_{\varphi=\phi},\mathcal{L}\varrho^\phi;\mathcal{L}\varrho^\phi\right) = O(p^{-1}),
    \end{split}
\end{align}
where $R(\hat{c},\hat{d};\hat{e}) := ||(\hat{c}-\hat{d})||/||\hat{e}||$ for arbitrary operators $\hat{c}$, $\hat{d}$ and $\hat{e}$, and $||\hat{f}||=[{\rm Tr}(\hat{f}\hat{f}^\dagger)]^{1/2}$ is the Hilbert-Schmidt norm for the operator $\hat{f}$~\cite{bengtsson2006}, and $\ell$, the best-matching diffusion rate for the respective family of models, is given in Eqs.~(\ref{ell_both}).

Considering the plot in Fig.~\ref{fig:diff_verify}c, the relative distance for the $p,q$-family does not appear to follow a power law for relatively large values of $p$, in contrast to the other presented cases (Figs.~\ref{fig:diff_verify}a~and~\ref{fig:diff_verify}b). Regardless, this data does appear to be approaching the power law given by the black line in the asymptotic limit $\mu\to\infty$. We verify this in the inset of that plot, which depicts, on a log-log scale, the fitted trendline of $R_0=5.90p^{-1.17}$ subtracted from the relative distance $R$, for fixed $p=50$. This data indicates that the relative distance converges toward the fitted black trendline in the larger plot of Fig.~\ref{fig:diff_verify}c, as $\mu\to\infty$ and $p=O(1)$ in $\mu$.

\begin{figure}[H]
\includegraphics[width=0.85\columnwidth]{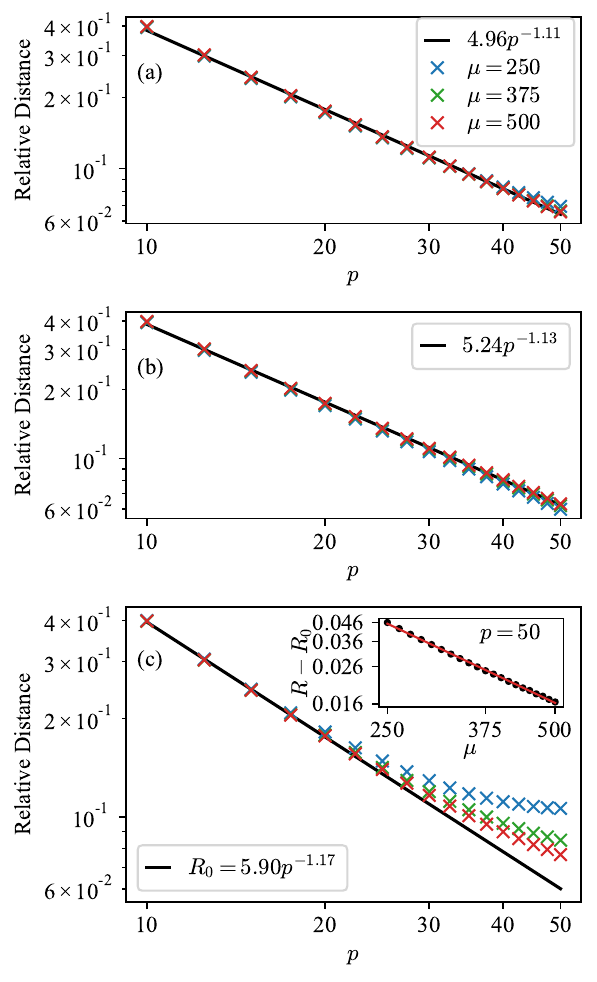}
\caption{\label{fig:diff_verify} Relative distance, as defined in the LHS of Eq.~(\ref{to_verify}) against $p$. (a) considers the Poissonian limit for the $p,\lambda$-family~(\ref{lambda_master}) and $p,q$-family~(\ref{q_master}), where respectively one sets $\lambda\to0$ and $q\to0$ (in this scenario, the two families are equivalent). (b) considers the the maximally sub-Poissonian case for the $p,\lambda$-family~(\ref{lambda_master}), where $\lambda\to0.5$. (c) considers the the maximally sub-Poissonian case for the $p,q$-family~(\ref{q_master}), where $q\to-1$. Crosses present numerical data, while the solid black lines are power-law fits to the data pertaining to $\mu = 500$ and $p\in[10,50]$ in panels (a) and (b), and $\mu = 500$ and $p\in[10,20]$ in panel (c). The inset in (c) considers a fixed value of $p=50$ and plots the fitted power law $R_0 = 5.90p^{-1.17}$ [black line in (c)] subtracted from the relative distance as a function of $\mu$. Black dots depict numerical data and the red line is the fitted power law $159\mu^{-1.48}$.}
\end{figure}

\subsection{Phase Squeezing in the Laser Cavity State}

The phase diffusion rates~(\ref{ell_both}) are vastly smaller, in terms of their scaling with $\mu$ compared to the ideal phase diffusion rate of a standard laser~\cite{Wiseman_1999}. This is closely linked to the fact that there is substantial phase squeezing in the pure cavity state~(\ref{pure_cavity_state}) undergoing phase diffusion. Here, we show this explicitly by computing the quantity $1-|\exp{i\hat{\varphi}_{\rm PB}}|^2$ for the state~(\ref{pure_cavity_state}), using the approximate form of $\rho_{\rm ss}$ given in Eq.~(\ref{ss_lin}). The operator $\hat{\varphi}_{\rm PB}$ is the Pegg-Barnett phase operator~\cite{Pegg_1988}, and the quantity $1-|\exp{i\hat{\varphi}_{\rm PB}}|^2$ represents a convenient measure of phase spread, being periodic over $2\pi$. In the limit of small phase uncertainty (which is the relevant case for us), one also has $1-|\langle\exp{i\hat{\varphi}_{\rm PB}}\rangle|^2\sim\langle(\Delta \hat{\varphi}_{\rm PB})^2\rangle$. Note that $\hat{\varphi}_{\rm PB}$ requires the choice of a reference phase $\theta_0$, which we choose as $\theta_0=-\pi$. This ensures the uncertainty relation $\langle(\Delta \hat{n})^2\rangle\langle(\Delta \hat{\varphi}_{\rm PB})^2\rangle\geq1/4$ for a phase probability distribution that is well localized about $\phi=0$. We thus also choose $\phi=0$ in the state~(\ref{pure_cavity_state}), for which we may write~\cite{Bandilla_1991}
\begin{align}\label{phase_variance_pure_state_1}
    \underset{D\to\infty}{\lim}\left(1-|\exp{i\hat{\varphi}_{\rm PB}}|^2\right) = 1-\left(\sum_{n=0}^{D-2}\sqrt{\rho_{n+1}\rho_{n}}\right)^2.
\end{align}
Substituting $\rho_n$ with the elements of the approximate Gaussian steady state~(\ref{ss_lin}), we find
\begin{align}\label{phase_variance_pure_state_2}
    1-|\exp{i\hat{\varphi}_{\rm PB}}|^2 = \frac{k}{4}+O(k^2).
\end{align}

In the linearized regime, the state~(\ref{pure_cavity_state}) therefore has the phase variance $\langle(\Delta\hat{\varphi}_{\rm PB})^2\rangle\sim k/4=\pi^2p/16\mu^2$ for $k\ll1$. If $p=O(1)$ in $\mu$, this phase variance is quadratically smaller than that of a coherent state, where $\langle(\Delta\hat{\varphi}_{\rm PB})^2\rangle\sim1/4\mu$ for $\mu\gg1$. Similar to the results presented in Ref.~\cite{Liu2021}, we hence see substantial phase squeezing in a PR cavity state for a laser that surpasses the standard quantum limit for coherence. It is also notable that that the number variance of the pure state~(\ref{pure_cavity_state}) is $\langle(\Delta \hat{n})^2\rangle \approx 1/k$ in the linearized regime, meaning it is also a minimum uncertainty state, as $\langle(\Delta \hat{n})^2\rangle\langle(\Delta \hat{\varphi}_{\rm PB})^2\rangle\sim1/4$ for $k\ll1$.

The key feature of the laser models that enables the behavior described above lies in the specific form of the operators governing their interactions with the environment. These operators are designed to preserve the phase statistics within the device more effectively than the creation ($\hat{a}^\dagger$) and annihilation ($\hat{a}$) operators that characterize gain and loss in a standard laser. Although these engineered operators are designed to raise and lower the cavity number by one, their coefficients in the number basis can be viewed as being much \textit{flatter} than those of $\hat{a}^\dagger$ and $\hat{a}$.

This point can be highlighted by using the $p,\lambda$-family of laser models within the linearized regime as an example. From the master equation (\ref{lam_mast_lin}), the pure cavity state defined in Eq.~(\ref{pure_cavity_state}) can be shown to evolve as
\begin{align}
    \begin{split}\label{lam_dynamics_pure_cav}
        \frac{d\varrho^\phi}{dt} = & - \frac{\mathcal{N}}{2}\sum_{n,m=1}^{D-2}\bigg{\{}\left[{G}_{n+1}(p,\lambda) - {G}_{m+1}(p,\lambda)\right]^2 \\ & \hspace{2.1cm} + \left[{L}_{n}(p,\lambda) - {L}_{m}(p,\lambda)\right]^2\bigg{\}} \\ & \hspace{1.75cm} \times\sqrt{\rho_n\rho_m}e^{i\phi(n-m)}\ket{n}\bra{m}.
    \end{split}
\end{align}
In Section~\ref{match_cav_diff}, it was verified that Eq.~(\ref{lam_dynamics_pure_cav}) is equivalent to a diffusion equation for the variable $\phi$ (see also Appendix~\ref{apendB}). The diffusion rate $\ell^{(p,\lambda)}$ can be obtained from the term in the curly braces of Eq.~(\ref{lam_dynamics_pure_cav}) by setting $m=n-1$ and taking the limit $k\to0$. Treating $n$ as a continuous variable, we find
\begin{align}\label{ell_intuitive}
    \begin{split}
        \frac{\ell^{(p,\lambda)}}{\mathcal{N}} = \underset{k\to0}{\rm lim}\left\{ \left[\frac{dG(n,p,\lambda)}{dn}\right]^2+ \left[\frac{dL(n,p,\lambda)}{dn}\right]^2\right\}.
    \end{split}
\end{align}
Writing the diffusion rate in the manner given in Eq.~(\ref{ell_intuitive}) highlights its close relation to the flatness of both the gain and loss coefficients. Intuitively, this is reasonable because the closer these coefficients are to being constant in the number basis of the cavity, the closer they will be to commuting with the phase operator $\hat{\varphi}_{\rm PB}$, and their action on a particular cavity state will therefore better preserve its phase statistics. Overall, the effect of this will be a reduction in the rate of phase diffusion, and hence, an increase in $\mathfrak{C}$.

\subsection{The First-Order Glauber Coherence Function and Laser Coherence Calculations}

Above, we found that the two families of laser models evolve according to pure phase diffusion in the linearized regime. The best matching diffusion rates that were obtained from this analysis are given in Eqs.~(\ref{ell_both}). In this subsection, we form a connection between these diffusion rates and the laser coherence $\mathfrak{C}$ for these families, by evaluating their first-order Glauber coherence functions.

First, we use similar methods from Refs.~\cite{Davidovich_1987,Bergou_1989b} to derive a Fokker-Planck-like equation for evolution of the upper diagonal elements of the density matrix $\rho_{n-1,n}:=\bra{n-1}\rho\ket{n}$. In Appendix~\ref{off-diag Append}, we show that by taking the continuum limit, with $\rho_{n-1,n}\to p_{\rm coh}(n,t)$, $G_{n}(p,\lambda)\to G(n,p,\lambda)$ and $L_{n}(p,\lambda)\to L(n,p,\lambda)$, the equations of motion of $p_{\rm coh}(n,t)$ for the $p,\lambda$- and $p,q$-families of laser models in the linearized regime be approximated with
\begin{align}\label{general OE}
    \begin{split}
        \dot p_{\rm coh}(n,t) = & - \frac{\ell}{2}p_{\rm coh}(n,t) \\ & + \frac{\partial}{\partial n}\left[\xi(n-\tilde\mu)p_{\rm coh}(n,t)\right] + \chi\frac{\partial^2}{\partial n^2}p_{\rm coh}(n,t).
    \end{split}
\end{align}
Here, $\tilde\mu = \mu+1/2$, and to specialize to the $p,\lambda$-family, one would make the substitutions $\ell\to\ell^{(p,\lambda)}$, $\xi\to \mathcal{N}k$ and $\chi\to\mathcal{N}$, while for the $p,q$-family, one would set $\ell\to\ell^{(p,q)}$, $\xi\to \mathcal{N}k(1+q/2)$ and $\chi\to\mathcal{N}(1+q/2)$. Eq.~(\ref{general OE}) resembles an Ornstein-Uhlenbeck process, but with the addition of a decay term $-\ell p
_{\rm coh}(n,t)/2$ in the dynamics. For an initial condition $p_{\rm coh}(m,t|n,0)=\delta(m-n)$, we find a solution to Eq.~(\ref{general OE}), for $t>0$
\begin{align}\label{solution_coh_cts}
    \begin{split}
        p_{\rm coh}(m,t|n,0) = & \frac{e^{-\ell t/2}}{\sqrt{2\pi\sigma(t)^2}} \\ & \times\exp{\frac{-\left[m-\tilde\mu - (n-\tilde\mu)e^{-\xi t}\right]^2}{2\sigma(t)^2}}.
    \end{split}
\end{align}
Here, $\sigma(t)^2$ is the same as that given in Eq.~(\ref{solution_cts}).

Now that we have obtained a solution for the evolution of the coherences within the system, we are in a position to evaluate the first-order Glauber coherence function. In a similar vein to what was done for the second-order Glauber coherence function in Section~\ref{g2_calculation}, we write down a general equation for $G^{(1)}(s+t,s)$ in the continuum limit (assuming the system resides in its steady state at time $s$)
\begin{align}\label{g1_continuum}
    \begin{split}
        G^{(1)}(s+t,s) \approx \mathcal{N}\int_0^\infty\int_0^\infty dm\,dn & L(n,k,x)L(m, k,x) \\ & \times p(n)p_{\rm coh}(m,t|n,0),
    \end{split}
\end{align}
where $L(n,k,x)$ is the continuum form of Eq.~(\ref{loss_coeff_linear_general}) and $p(m) = p(m,t\to\infty|n,0)$ is the same as that in~(\ref{g2_generalcts}). It is the case for both families of laser models that $\chi/\xi=O(1/k)$, which is small for $1\ll p\ll\mu^2$. Thus, following an expansion of the loss functions to second order in $k$, $L(n,k,x) \approx 1 + (k(1-x)/2)(n-\mu) - (k^2(1-x)^2/8)(n-\mu)^2$, Eq.~(\ref{g1_continuum}) can be evaluated, yielding
\begin{align}\label{g1_general_1}
    \begin{split}
        G^{(1)}(s+t,s) = \bigg[ &1 + \frac{k(1-x)}{4}\left(1 - k(1-x)\frac{\chi}{\xi}\right)\\ & \times\left(1-e^{-\xi |t|}\right) + O(k^2) \bigg]\mathcal{N}e^{-\ell |t|/2}.
    \end{split}
\end{align}

It is illuminating to employ Eq.~(\ref{g2_generalcts2}), which allows us to express the general form of $G^{(1)}(s,t)$ as
\begin{align}\label{g1_general_2}
    G^{(1)}(s+t,s) \approx \mathcal{N}e^{-\ell |t|/2}\bigg[ 1 + \frac{1}{4}\left(g_{\rm ps}^{(2)}(t) - g_{\rm ps}^{(2)}(0)\right)\bigg],
\end{align}
where we have neglected the $O(k^2)$ terms in the square braces of~(\ref{g1_general_1}). Here, we can see deviations within the short-time behavior of $G^{(1)}(s+t,s)$ compared to what what one would expect by considering an ideal laser beam [given by Eq.~(\ref{ideal laser})], which would instead be the exponential decay $G^{(1)}(s+t,s) = \mathcal{N}e^{-\ell |t|/2}$. These deviations occur as the two families of laser models depart from exhibiting Poissonian beam photon statistics, and are significant within the photon correlation time [the timescale associated with the decay in $g_{\rm ps}^{(2)}(t)$], which is much shorter than the coherence time of the laser (the reciprocal of the phase diffusion rate of the laser cavity). Thus, on longer timescales, Eq.~(\ref{g1_general_1}) is indistinguishable from what one would obtain from an ideal laser beam. In Fig.~\ref{fig:G1}, we compare our analytically-derived formulas for $G^{(1)}(s+t,s)$ with numerical results, from which excellent agreement is seen in the appropriate parameter regime.

\begin{figure}[H]
\includegraphics[width=1.0\columnwidth]{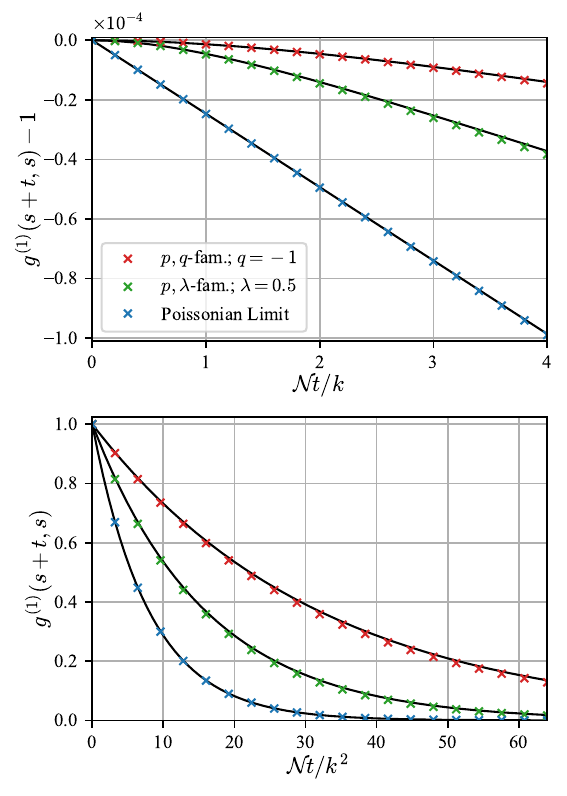}
\caption{\label{fig:G1}Top: Short-time behaviour of the First-order Glauber coherence function for the $p,\lambda$- and $p,q$-families of laser models for $p=50$ and $\mu = 250$. Crosses represent numerical data and black curves depict the analytical formula~(\ref{g1_general_2}). The ``Poissonian limit" refers to the case in which the two families are described by the same master equation (respectively, with $\lambda\to0$ and $q\to0$). Bottom: Same as the top plot, but considering a much larger timescale, on the same order of magnitude as the coherence time of the laser cavities.}
\end{figure}

Because of the vast difference between the photon correlation time and the coherence time, the coherence $\mathfrak{C}$ can still be expressed in an intuitive manner. To see this explicitly, we evaluate Eq.~(\ref{coh g1}) using Eq.~(\ref{g1_general_1}) and find
\begin{align}\label{coh_intuitive_linearized}
    \mathfrak{C} = \frac{4\mathcal{N}}{\ell} + O(1/k).
\end{align}
Recall from Eqs.~(\ref{ell_both}), that $\ell=\Theta(k^2)$ for our families of laser models; hence, in the linearized regime, the coherence can be written as being proportional to flux divided by the phase diffusion rate of the laser cavity. This means that, within this regime, we may still interpret the coherence in the same manner as we did for the ideal laser beam state discussed in Section~\ref{sec:coherence_general}, even though there can be a significant degree of number squeezing in the beam. That is, by taking $1/\ell$ as the coherence time of the laser, the coherence is thus roughly equal to the number of photons emitted into the beam by the laser that are mutually coherent. 

Using $\ell^{(p,\lambda)}$ and $\ell^{(p,q)}$ given in Eqs.~(\ref{ell_both}), we can obtain explicit formulas for the coherence of the $p,\lambda$- and $p,q$-families of laser models in the linearized regime,
\begin{subequations}
    \begin{align}\label{coh_lam_lin}
        \mathfrak{C}^{(p,\lambda)}_{\rm lin} = \frac{1}{(2\lambda^2-2\lambda+1)}\frac{256\mu^4}{\pi^4p^2},
    \end{align}
    \begin{align}\label{coh_q_lin}
        \mathfrak{C}^{(p,q)}_{\rm lin} = \frac{1}{(1+q/2)^2}\frac{256\mu^4}{\pi^4p^2}.
    \end{align}
\end{subequations}
It is worth comparing these formulas with the heuristically-derived expressions for the coherence of the $p,\lambda$- $p,q$-families in Eqs.~(58a) and (58b) of Ref.~\cite{Ostrowski2022a}. Those heuristic formulas show very good agreement with numerical evaluations for $p>3$ and $\mu\to\infty$. In the limit $p\to\infty$, those heuristic formulas are asymptotically equivalent to Eqs.~(\ref{coh_lam_lin}) and (\ref{coh_q_lin}) given above. In Fig~\ref{fig:coh}, we show this agreement graphically, while also comparing with numerical calculations of $\mathfrak{C}$. Given a relatively large value for $\mu$, we can see that moderately-large values of $p$ yield very good agreement between these numerically-, heuristically- and analytically-derived quantities.

\section{A Tighter Upper Bound for Laser Coherence}
\label{tighter ub}

In this section we provide a final result: an upper bound for laser coherence that is tighter by a factor of $3/8$ than that proven in Ref.~\cite{Baker2020}. The proof of this new upper bound follows from the consideration of a stricter, yet comparatively more elegant condition on the beam. Namely, it requires that the beam be \textit{equivalent} to a coherent state undergoing pure phase diffusion [the so-called ideal laser beam state~(\ref{ideal laser})], and therefore constraints the Glauber coherence functions of the beam to all orders. In contrast, the derivation of the upper bound on $\mathfrak{C}$ in previous works~\cite{Baker2020,Ostrowski2022a,Ostrowski2022b} required constraints on only the first- and second-order Glauber coherence functions. 

This stricter beam requirement means that demonstrating achievability (i.e., finding a laser model that both saturates the Heisenberg-limited scaling of $\mathfrak{C}$ and satisfies the beam constraints required for the upper bound) is more difficult. However, we provide strong evidence that such a laser model indeed exists. This laser model can be realized from either of the $p,\lambda$- and $p,q$-laser families studied throughout this work, upon specializing to the Poissonian case, and the linearized regime.

\begin{figure}[H]
\includegraphics[width=1.0\columnwidth]{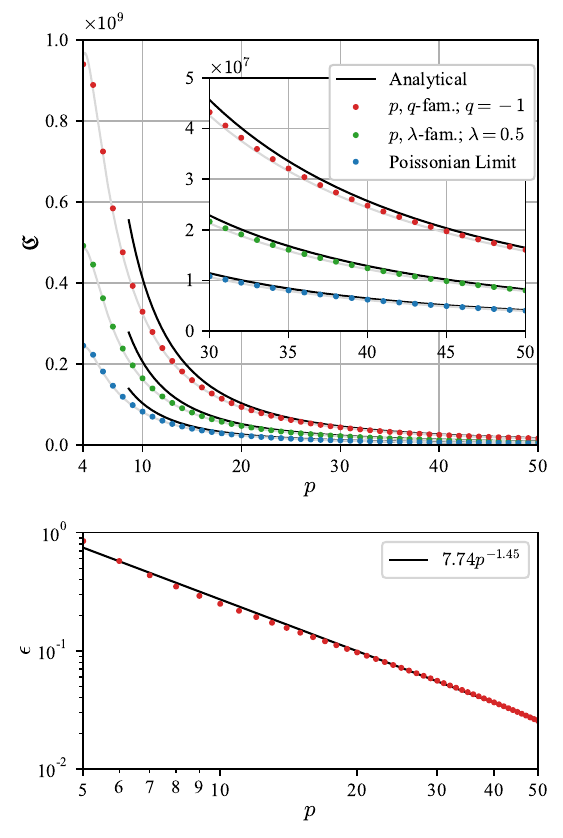}
\caption{\label{fig:coh}Top: Coherence of the $p,\lambda$- and $p,q$-families of laser models as a function of the parameter $p$, with $\mu = 250$. Numerical data is given by the coloured markers, while the analytically-derived formulas of Eqs.~(\ref{coh_lam_lin})~and~(\ref{coh_q_lin}) are represented by the black curves. Light-grey curves depict heuristically derived formulas for the coherence from Ref.~\cite{Ostrowski2022a} [see Eqs.~(58a) and (58b) in that work]. The ``Poissonian limit" refers to the case in which the two families are described by the same master equation (respectively, with $\lambda\to0$ and $q\to0$). Bottom: Relative error $\epsilon$ (red dots) between the numerical and analytical data given in the top plot. For a given value of $p$, the data points are essentially the same, regardless of the particular family considered. Black line shows a power law fitted to data pertaining to parameter values of $p\geq20$.}
\end{figure}

\subsection{Proof of the Upper Bound for $\mathfrak{C}$}

We proceed by explicitly stating the four conditions on a laser device producing a beam, for which we will derive the tighter upper bound on $\mathfrak{C}$; these are as follows.
\begin{enumerate}
     \item \textbf{One Dimensional Beam}---The beam propagates away from the device in one direction at a constant speed, occupying a single transverse mode and polarisation. The beam can therefore be described by a scalar quantum field with the annihilation operator $\hat{b}(t)$ satisfying $[\hat{b}(t),\hat{b}^\dagger(t')] = \delta(t-t')$.
    \item \textbf{Endogenous Phase}---Coherence in the beam proceeds only from coherence in the excitations within the device. Specifically, a phase shift imposed on the
    laser state at some time $T_0$ will lead, in the future, to the
    same phase shift on the beam emitted after time $T_0$, as well
    as on the laser state.
    \item \textbf{Stationary Beam Statistics}---The statistics of the device and beam have
    a long-time limit that is unique and invariant under time
    translation.  This means that the mean excitation number within the device, $\langle \hat{n}_{\rm c} \rangle$, has a unique stationary value $\mu$. 
    \item \textbf{Ideal Laser Beam---} The stationary, one dimensional beam produced by the device is \textit{statistically equivalent} to an ideal beam state. That is, the beam can be considered to be in an eigenstate $|\beta(t)\rangle$ of $\hat{b}(t)$, with an eigenvalue $\beta(t)=\sqrt{\mathcal{N}}e^{i\sqrt{\ell}W(t)}$, where $W(t)$ represents a Wiener process.
\end{enumerate}
The first three of these conditions are the same as those used in Refs.~\cite{Ostrowski2022a,Baker2020,Ostrowski2022b}; we therefore refer the reader to those references for details on each of those conditions. With our fourth condition, the term ``\textit{statistically equivalent}" means that the state of the beam produced by the laser device matches the ideal beam state $\ket{\beta(t)}$ to an arbitrarily good approximation. Details regarding the approximation that we appeal to will be given below in Sec.~\ref{p_fam_revisit}, upon the specification of a particular laser model. It is noteworthy here, however, that this condition is stricter than the fourth conditions used in the theorems of Refs.~\cite{Ostrowski2022a,Baker2020,Ostrowski2022b}, which instead place constraints only on the first- and second-order Glauber coherence functions of the beam. A beam that matches $\ket{\beta(t)}$ will guarantee that its first- and second-order Glauber coherence functions also match those of $\ket{\beta(t)}$, as required by those works.

With these four conditions, we now prove the following theorem.
\newtheorem{theorem}{Theorem}
\begin{theorem}[Upper bound on $\mathfrak{C}$ for an ideal laser beam]\label{thm_1}
For a device which satisfies conditions 1--4 stated above, the coherence is bounded from above by
\begin{align}\label{upper_bound}
    \mathfrak{C} \lesssim \frac{1}{4}\left|\frac{3}{z_A}\right|^6\mu^4,
\end{align}
in the asymptotic limit $\mu\to\infty$, where $\mu$ is the mean number of excitations stored within the device, and $z_A\approx-2.338$ is the first zero of the Airy function.
\end{theorem}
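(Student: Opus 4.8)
The plan is to turn the coherence bound into a variational problem for the cavity's phase-diffusion rate. Condition 4 forces the beam to be statistically the ideal state $\ket{\beta(t)}$, so its first-order coherence is the single exponential $G^{(1)}(s+t,s)=\mathcal{N}e^{-\ell|t|/2}$ and, via \eqref{coh g1}, $\mathfrak{C}=4\mathcal{N}/\ell$ holds \emph{exactly} (not merely asymptotically as in \eqref{coh_intuitive_linearized}). The theorem is therefore equivalent to the lower bound $\ell\gtrsim 16\,|z_A/3|^{6}\,\mathcal{N}/\mu^{4}$. First I would use Conditions 2 and 3 to move this diffusion from the beam onto the device: endogenous phase together with a unique stationary state forces the cavity to carry the beam's phase, so the stationary mixture admits a PR ensemble $\{(\varrho^\phi,d\phi/2\pi)\}$ of the form \eqref{pure_cavity_state} whose label $\phi$ undergoes pure diffusion at the very same rate $\ell$, i.e. $\mathcal{L}\varrho^\phi=(\ell/2)\,d^2\varrho^\varphi/d\varphi^2\big|_{\varphi=\phi}$ as in \eqref{PR_diff_phase}.

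Next I would express $\ell$ as a functional of the stationary number distribution $\rho_n=\bra{n}\rho_{\rm ss}\ket{n}$ alone. Writing the generator in the first off-diagonal (coherence) sector and matching it to the right-hand side of \eqref{PR_diff_phase}, one finds that the leading gradient term annihilates the amplitude $\sqrt{\rho_n}$ of \eqref{pure_cavity_state}; hence $\ell$ is genuinely second order in the spread of the distribution, and the Poissonian character enforced by Condition 4 fixes the coefficient of the surviving term. This yields a clean lower bound $\ell\ge\mathcal{N}\,\Phi[\rho]$ in which, in the continuum limit, $\Phi[\rho]$ is proportional to the square of the ``kinetic'' functional $\int_0^\infty(\partial_n\sqrt{\rho})^2\,dn$. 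I expect the fixing of this proportionality constant to be exactly where the strengthened hypothesis pays off: tracking the constant that Baker \textit{et al.} left loose through their weaker $g^{(2)}$-window should reproduce their prefactor multiplied by $3/8$.

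I would then minimise $\Phi[\rho]$ subject to $\sum_n\rho_n=1$ and $\sum_n n\,\rho_n=\mu$, with support on $n\ge0$. The hard edge at the vacuum supplies an effective Dirichlet condition $\sqrt{\rho_0}=0$, and the Euler--Lagrange equation for $\psi=\sqrt{\rho}$ is of Airy type, $-\psi''+\alpha^3 n\,\psi=\alpha^2|z_A|\,\psi$ on $[0,\infty)$; the nodeless minimiser pinned to the wall is $\psi(n)\propto\mathrm{Ai}(\alpha n+z_A)$, so $\rho_n\propto\mathrm{Ai}^2(\alpha n+z_A)$ and the first zero $z_A$ enters directly. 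The virial relation for the linear potential, $2\int_0^\infty(\psi')^2\,dn=\alpha^3\mu$, then gives the scale relation $\mu=2|z_A|/3\alpha$, and substituting the minimal $\Phi$ back into $\mathfrak{C}=4\mathcal{N}/\ell$ collapses everything to $\mathfrak{C}\lesssim\tfrac14|3/z_A|^6\mu^4$.

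The hard part is the second step. Because the phase-diffusion rate vanishes at leading order in the gradient expansion---$\sqrt{\rho}$ is an exact zero mode of the leading coherence-sector operator---the sharp constant in $\ell\ge\mathcal{N}\,\Phi[\rho]$ resides entirely in the subleading correction, and extracting it using only Conditions 1--4 (rather than the explicit $p$-families) is delicate. I would also have to justify the continuum reduction and the emergence of the Dirichlet edge at $n=0$ carefully enough to license the asymptotic $\lesssim$, and to verify that the Airy profile is a true minimiser, not merely a stationary point, of the constrained functional.
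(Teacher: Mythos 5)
Your overall skeleton is the right one---under Condition 4 one has $\mathfrak{C}=4\mathcal{N}/\ell$ exactly, the theorem is equivalent to $\ell\gtrsim 16|z_A/3|^6\mathcal{N}/\mu^4$, and the Airy profile with first zero $z_A$ is indeed the minimizer of the relevant phase-spread functional at fixed mean excitation number (that is the content of the cited result, Eq.~(\ref{direct_phase}), which the paper also does not re-derive). But the step you yourself flag as ``the hard part'' is not merely delicate; as sketched it cannot work, and it is where the entire content of the theorem lives. Your route to $\ell\ge\mathcal{N}\,\Phi[\rho]$ requires that Conditions 1--4 force the device to (i) admit a physically realizable ensemble of the specific form~(\ref{pure_cavity_state}) and (ii) evolve it by pure phase diffusion via a generator whose coherence-sector gradient expansion you can match to Eq.~(\ref{PR_diff_phase}). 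Conditions 1--4 say nothing about the device's internal dynamics beyond stationarity and endogenous phase---the device need not be a single mode, need not be Markovian, and need not have any ``coherence-sector operator'' of the assumed form. The PR-ensemble/pure-diffusion picture is established in the paper only for the explicit $p$-families in the linearized regime (and only approximately, cf.\ Eq.~(\ref{to_verify})), and it is used there for \emph{achievability}, not for the converse bound. Your proposal effectively assumes the structure it needs to prove.

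The paper closes exactly this gap by a measurement-theoretic argument that never touches the device's generator. An optimal filtering measurement on the beam achieves MSE $\sim\mathfrak{C}^{-1/2}$ (Eq.~(\ref{MSE_Filt})) and, by Lemmas 1--3 of Ref.~\cite{Baker2020}, leaves the device in a fiducial state with the phase estimate encoded by $\hat{n}_{\rm c}$ and mean excitation number still $\mu$. Optimal retrofiltering on the subsequent beam segment then estimates that encoded phase with MSE $\sim2\mathfrak{C}^{-1/2}$ (the factor of 2 coming from the vanishing covariance of the filtering and retrofiltering errors), and a data-processing inequality---retrofiltering through the beam cannot beat a direct measurement on the device---gives $2\mathfrak{C}^{-1/2}\gtrsim\langle(\Delta\hat{\varphi}_{\rm PB})^2\rangle_{{\rm min},\mu}$, i.e.\ Eq.~(\ref{coh_opt_phase}). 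This is precisely the inequality $\ell\geq\mathcal{N}\bigl[\langle(\Delta\hat{\varphi}_{\rm PB})^2\rangle_{{\rm min},\mu}\bigr]^2$ you were after, with the sharp constants supplied by the optimal-estimation results of Ref.~\cite{Laverick_2018} rather than by any gradient expansion; the $3/8$ improvement over Ref.~\cite{Baker2020} comes from replacing heterodyne filtering/retrofiltering with optimal adaptive measurements, not from tracking a constant through a $g^{(2)}$ window. To repair your proof you would need to replace the generator-matching step with an argument of this kind (or an equivalent information-theoretic one); the variational Airy calculation at the end can then be imported wholesale from Ref.~\cite{Bandilla_1991}.
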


\begin{proof}

In a similar manner to the proofs presented in Refs.~\cite{Baker2020,Ostrowski2022a,Ostrowski2022b}, the upper bound that we derive follows from the comparison of two methods for estimating an encoded phase on the laser device. In particular, we first consider a filtering measurement to be performed on the beam, which encodes a phase onto the state of the device. Then, we compare the precision of two measurement schemes used to estimate that encoded phase; one being a retrofiltering measurement on the beam and the other being a direct measurement on the device. Due to the fact that the precision of the retrofiltering measurement cannot outperform the direct measurement to estimate the encoded phase, the upper bound on $\mathfrak{C}$ will follow from known bounds on phase estimation. Unlike those previous proofs that utilize heterodyne measurements to perform the filtering and retrofiltering procedures, we will consider \textit{optimal} filtering and retrofiltering measurements to carry out these tasks~\cite{Laverick_2018}. It is this distinction that leads to a tighter upper bound on $\mathfrak{C}$.

We proceed by invoking Condition 3 (Stationary Beam Statistics) and suppose that, at some time $T$, that the laser device resides in its unique steady state $\rho_{\rm c}(T) = \rho_{\rm c}^{\rm ss}$. Given Condition 2 (Endogenous Phase) and Lemma 2 of Ref.~\cite{Baker2020}, we know that this steady state is invariant under all optical phase shifts. That is, for all $\theta$, $\mathcal{U}^\theta_{\rm c}(\rho_{\rm c}^{\rm ss}) = \rho_{\rm c}^{\rm ss}$, where $\mathcal{U}^\theta_{\rm c}(\bullet):=e^{i\theta\hat{n}_{\rm c}}\bullet e^{-i\theta\hat{n}_{\rm c}}$, and $\hat{n}_{\rm c}$ is the generator of phase shifts on the Hilbert space of the device.

Next, we invoke Condition 4 (Ideal Laser Beam) and suppose that a filtering measurement is performed on this coherent beam over the time interval $[T,T')$, yielding an estimate $\phi_F$ of the true beam phase $\phi(T')$. We consider an optimal filtering measurement, which can be achieved using the adaptive homodyne measurement scheme detailed in Ref.~\cite{Laverick_2018}. The achievable minimum mean square error (MSE) for such a measurement is given by Eq.~(17) of that work; for the relevant case where the beam phase varies as a Weiner process, this is explicitly given by
\begin{align}\label{MSE_Filt}
    \langle(\phi(T') - \phi_F)^2\rangle \sim \mathfrak{C}^{-1/2}.
\end{align}
In Eq.~(\ref{MSE_Filt}), the asymptotic equivalence holds in the limit of small phase uncertainty, where $|\phi(t) - \phi_{\rm LO}(t)|\ll1$, and $\phi_{\rm LO}(t)$ is the phase of the local oscillator phase used in the adaptive detection scheme. Therefore, for this expression to be used, one must be able to obtain a sufficiently accurate estimate of the beam phase preceding the filtering measurement. In Appendix~\ref{Validity of Linearized Retrofiltering Approximation}, we verify this can be done with a heterodyne measurement provided $\mathfrak{C}\gg1$, which is the relevant regime for us.

Following this initial filtering measurement, we can use Lemma 1 of Ref.~\cite{Baker2020} to make a statement about the state of the device immediately after the filtering measurement. That is, given the outcome $\phi_F$, at time $T'$ the device resides in the fiducial state $\rho_{\rm c}^{\rm fid}$ with an optical phase shift $\phi_F$ encoded by the generator $\hat{n}_{\rm c}$,
\begin{align}\label{state_T'}
    \rho_{{\rm c}|\phi_F}(T') = \mathcal{U}^{\phi_F}_{\rm c}(\rho^{\rm fid}_{\rm c}).
\end{align}
Here, the fiducial state is independent of $\phi_F$. Moreover, by Lemma 3 of Ref.~\cite{Baker2020}, this state shares the same number statistics as the steady state; i.e., $\bra{n}\rho^{\rm ss}_{\rm c}\ket{n} = \bra{n}\rho^{\rm fid}_{\rm c}\ket{n}$. Crucially, this means that the mean excitation number is the same for both states, with ${\rm Tr}(\hat{n}_{\rm c}\rho^{\rm fid}_{\rm c}) = {\rm Tr}(\hat{n}_{\rm c}\rho^{\rm ss}_{\rm c}) = \mu$.

Now, we consider an optimal \textit{retrofiltering} measurement, performed on the beam over the time interval $(T',T'']$, to estimate $\phi_F$. From Ref.~\cite{Laverick_2018}, two statements can be made regarding this optimal retrofiltering measurement: First, it is just as accurate as the optimal filtering estimate; i.e., $V:=\langle(\phi(T')-\phi_R)^2\rangle = \langle(\phi(T')-\phi_F)^2\rangle$. Second, the MSE in the optimal \textit{smoothed} estimate $\phi_S=(\phi_F+\phi_R)/2$ [i.e., the optimal estimate of $\phi(T')$ obtained from both the filtering and retrofiltering information] evaluates to $\langle(\phi(T')-\phi_S)^2\rangle = V/2$. Together, these two statements imply that the covariance $\langle(\phi(T')-\phi_R)(\phi(T')-\phi_F)\rangle$ vanishes. Using this with Eq.~(\ref{MSE_Filt}), we find
\begin{align}\label{MSE_filtret}
    \langle(\phi_F - \phi_R)^2\rangle \sim 2\mathfrak{C}^{-1/2}.
\end{align}

Instead of the optimal retrofiltering measurement detailed above, one could obtain another estimate $\phi_D$ of the encoded phase $\phi_F$ by performing a direct measurement on the laser device at time $T'$. The MSE for this estimate will be as large as the phase variance of the fiducial state $\rho_{\rm c}^{\rm fid}$; i.e.,
\begin{align}\label{direct_variance}
    \langle(\phi_F - \phi_D)^2\rangle = \langle(\Delta\hat{\varphi}_{\rm PB})^2\rangle_{\rm fid}\geq\langle(\Delta\hat{\varphi}_{\rm PB})^2\rangle_{{\rm min},\mu},
\end{align}
where $\langle(\Delta\hat{\varphi}_{\rm PB})^2\rangle_{\rm fid}$ denotes the phase variance of the fiducial state. Here, we have also introduced the quantity $\langle(\Delta\hat{\varphi}_{\rm PB})^2\rangle_{{\rm min},\mu}$, which denotes the smallest phase variance that can be possibly achieved in a quantum state that has a mean excitation number of $\mu$ [recall that ${\rm Tr}(\hat{n}_{\rm c}\rho_{\rm c}^{\rm fid})=\mu$]. Given that the phase information from the initial filtering measurement is encoded onto the fiducial state purely by the generator $\hat{n}_{\rm c}$, the direct measurement must outperform any other measurement that estimates the phase $\phi_F$. This includes the retrofiltering measurement detailed above, which is estimating the cavity phase via the measurement on an intermediate system (that is, the beam segment over the time interval $(T',T'']$). Thus, we have the inequality $\langle(\phi_F - \phi_R)^2\rangle \geq \langle(\phi_F - \phi_D)^2\rangle$; using this with Eqs.~(\ref{MSE_filtret}) and (\ref{direct_variance}) leads to a bound for the coherence in terms of the minimal phase variance
\begin{align}\label{coh_opt_phase}
    \mathfrak{C}\lesssim\frac{4}{\left[\langle(\Delta\hat{\varphi}_{\rm PB})^2\rangle_{{\rm min},\mu}\right]^2}.
\end{align}

From here, we invoke a result from Ref~\cite{Bandilla_1991}, where it was show that
\begin{align}\label{direct_phase}
    \langle(\Delta\hat{\varphi}_{\rm PB})^2\rangle_{{\rm min},\mu} \sim 4\left|\frac{z_A}{3}\right|^3\frac{1}{\mu^2},
\end{align}
in the limit $\mu\gg1$. Substituting Eq.~(\ref{direct_phase}) into Eq.~(\ref{coh_opt_phase}), we obtain an upper bound for the coherence in terms of the energy resource $\mu$,
\begin{align}
    \begin{split}
        \mathfrak{C} & \lesssim \frac{1}{4}\left|\frac{3}{z_A}\right|^6\mu^4 \approx 1.1156\mu^4.
    \end{split}
\end{align}

\end{proof}
As stated above, this upper bound for the coherence is tighter than that derived in Ref.~\cite{Baker2020} by a factor of $3/8$.

\subsection{The $p$-Family of Laser Models}\label{p_fam_revisit}

We conclude this section by providing evidence that a family of laser models exists which satisfies the four Conditions required for Theorem~\ref{thm_1}, and achieves Heisenberg-limited scaling for $\mathfrak{C}$. The family of laser models which achieves this is the so-called ``$p$-family" of laser models, introduced in Refs.~\cite{Ostrowski2022a,Ostrowski2022b}, and was conceived from a straightforward generalization of the original Heisenberg-limited laser model put forth in~\cite{Baker2020}. This family can be realised as a special case of both the $p,\lambda$- and $p,q$-families of laser models by respectively setting $\lambda=0$ and $q=0$. This is the Poissonian case for the two families, where $Q_{t\to\infty,\lambda = 0}=Q_{t\to\infty,q = 0}=0$. The master equation characterizing the $p$-family is
\begin{align}\label{p_master}
    \begin{split}
        \frac{d\rho}{dt} & = \mathcal{L}^{(p)}\rho \\ & = \mathcal{N}\left(\mathcal{D}[\hat{G}^{(p,0)}] + \mathcal{D}[\hat{L}^{(p,0)}]\right)\rho.
    \end{split}
\end{align}

It has already been shown that this family of laser models exhibits a coherence that saturates the Heisenberg-limited scaling of $\mathfrak{C}=\Theta(\mu^4)$ provided that $p=O(1)$ in $\mu$, and further that it satisfies Conditions 1--3 stated above~\cite{Baker2020}. Therefore, we are left to verify Condition 4, the Ideal Laser Beam Condition, which will be shown to hold asymptotically in the linearized regime. 

We first consider the state of the cavity and beam that is produced in an infinitesimal time interval $[t,t+dt)$. From Eq.~(\ref{p_master}), and the assumption that the loss term $\mathcal{N}\mathcal{D}[\hat{L}^{(p,0)}]\rho$ arises from quantum vacuum white noise coupling~\cite{Gardiner2004}, we find
\begin{widetext}
\begin{align}\label{dynamics_long}
    \begin{split}
        {\rm Tr}_{\rm e'}\left[\mathcal{U}_{\rm ce}^{t\to t+dt}(\rho_{\rm c}\otimes\rho_{\rm e})\right] = & \rho_{\rm c}\otimes|0\rangle_{\rm b}\langle0| + \sqrt{\mathcal{N}dt}\left(\hat{L}^{(p,0)}\rho_{\rm c}\otimes|1\rangle_{\rm b}\langle0| + \rho_{\rm c}\hat{L}^{(p,0)\dagger}\otimes|0\rangle_{\rm b}\langle1|\right) + \mathcal{N}dt\mathcal{D}[\hat{G}^{(p,0)}]\rho_{\rm c}\otimes|0\rangle_{\rm b}\langle0| \\ & \hspace{-0.5cm} + \mathcal{N}dt\left(\hat{L}^{(p,0)}\rho_{\rm c}\hat{L}^{(p,0)\dagger}\otimes|1\rangle_{\rm b}\langle1| - \frac{1}{2}\left[\hat{L}^{(p,0)\dagger}\hat{L}^{(p,0)}\rho_{\rm c}+\rho_{\rm c}\hat{L}^{(p,0)\dagger}\hat{L}^{(p,0)}\right]\otimes|0\rangle_{\rm b}\langle0|\right) + o(\mathcal{N}dt).
    \end{split}
\end{align}
\end{widetext}
In this instance, we are using notation where $\rho_{\rm c}\otimes\rho_{\rm e}$ represents the joint state of the cavity (subscript c) and its environment (subscript e). We have also defined the superoperator $\mathcal{U}_{\rm ce}^{t\to t+\tau}:\mathcal{B}(\mathcal{H}_{\rm c}\otimes\mathcal{H}_{\rm e})\xrightarrow{}\mathcal{B}(\mathcal{H}_{\rm c}\otimes\mathcal{H}_{\rm b}\otimes\mathcal{H}_{\rm e'})$ that describes the unitary evolution of the device and its surrounding environment from time $t$ to $t+\tau$. $\mathcal{B}(\mathcal{H})$ denotes the set of bounded linear operators acting on Hilbert space $\mathcal{H}$, and the segment of the beam generated in the time interval $(t,t+\tau]$ is contained within the Hilbert space $\mathcal{H}_{\rm b}$. The primed environment space $\mathcal{H}_{\rm e'}$ contains everything not counted as part of the device and this beam segment; that is, $\mathcal{H}_{\rm e'}$ is a strict factor of $\mathcal{H}_{\rm e}$. Here, we are also treating the state of the beam segment created in $(t,t+dt]$ as belonging to a Fock space truncated after a single excitation. In this space, the beam operators are $\sqrt{dt}\hat{b}(t) = |0\rangle_{\rm b}\langle1|$ and the photon number operator over this small beam segment becomes $dt\hat{b}^\dagger(t)\hat{b}(t)=|1\rangle_{\rm b}\langle1|$.

We consider the cavity to be in the pure state, $\rho_{\rm c}=\varrho^\phi$, defined in Eq.~(\ref{pure_cavity_state}). It turns out that this pure cavity state is essentially an eigenstate of the loss operator:
\begin{align}\label{apx eigen}
    \begin{split}
        \hat{L}^{(p,0)}\varrho^\phi & = \left[\sum_{l=1}^{D-1}\sqrt{\frac{\rho_{l-1}}{\rho_l}}\ketbra{l-1}{l}\right] \\ & \hspace{1cm} \times\left[\sum_{m,n=0}^{D-1}\sqrt{\rho_{m}\rho_{n}}e^{i(m-n)\phi}\ketbra{m}{n}\right] \\ & \hspace{-.5cm} = e^{i\phi}\sum_{m=0}^{D-2}\sum_{n=0}^{D-1}\sqrt{\rho_{m}\rho_{n}}e^{i(m-n)\phi}\ketbra{m}{n} \\ & \hspace{-.5cm} = e^{i\phi}\left(\varrho^\phi +\hat{s}\right),
    \end{split}
\end{align}
where $\hat{s}= -\sum_{n=0}^{D-1}\sqrt{\rho_{D-1}\rho_n}e^{i(D-1-n)\phi}\ketbra{D-1}{n}$. We find that the Hilbert-Schmidt norm of $\hat{s}$ evaluates to
\begin{align}
    \begin{split}
        ||\hat{s}|| = \sqrt{{\rm Tr}(\hat{d}\hat{d}^\dagger)} & = \sqrt{\rho_{D-1}} \\ &  \sim\frac{1}{\pi^{1/4}}\left(\frac{\pi}{2\mu}\right)^{\frac{(p+1)}{2}}\sqrt{\frac{\Gamma\left(\frac{2+p}{2}\right)}{\Gamma\left(\frac{1+p}{2}\right)}},
    \end{split}
\end{align}
where the asymptotic equivalence holds in the limit $\mu\to\infty$. This indicates that the operator $\hat{s}$ is very small relative to $\varrho^\phi$ (where $||\varrho^\phi||=1$) in the linearized regime. Hence, we can safely neglect $\hat{s}$ in the last line of Eq.~(\ref{apx eigen}), and obtain the very accurate approximation $\hat{L}^{(p,0)}\varrho^\phi\approx e^{i\phi}\varrho^\phi$. Under this approximation, the reduced state of the infinitesimal beam segment emitted in $(t,t+dt]$ is in a single mode coherent state,
\begin{align}\label{dynamics_traced}
    {\rm Tr}_{\rm ce'}\left[\mathcal{U}_{\rm ce}^{t\to t+dt}(\varrho^\phi\otimes\rho_{\rm e})\right] \approx |\sqrt{\mathcal{N}dt}e^{i\phi}\rangle_{\rm b}\langle\sqrt{\mathcal{N}dt}e^{i\phi}|,
\end{align}
where we have neglected terms $o(\mathcal{N}dt)$. In Section~\ref{match_cav_diff}, it was shown that the ensemble of cavity states $\{(\varrho^\phi,d\phi/2\pi)\}$ is PR in the regime $1\ll p\ll\mu$, and the dynamics of this ensemble matched that of pure phase diffusion. With this, we can deduce from Eq.~(\ref{dynamics_traced}) that the beam state can be described as a coherent state, also undergoing pure phase diffusion; i.e., the ideal beam state described in Condition 4 (Ideal Laser Beam). 

It was stated that the term \textit{statistically equivalent} used in the description of Condition 4 meant that that the state of beam produced by the laser device matches, to an arbitrarily good approximation, a coherent state undergoing pure phase diffusion. From the analysis presented directly above, we can see that the extent to which the laser beam of the $p$-family of laser models satisfies this condition depends on the extent to which the ensemble of cavity states $\{(\varrho^\phi,d\phi/2\pi)\}$ is PR\footnote{Although the intracavity dynamics for the $p,\lambda$- and $p,q$-families are also describable by pure diffusion for the cases $\lambda\neq0$ and $q\neq0$ in the linearised regime, the pure cavity states $\varrho^\phi$ are no longer eigenstates of their respective loss operators. This means that with these sub-Poissonian families of laser models, it is invalid to assert the beam is in the Ideal Laser Beam State~(\ref{ideal laser}) in general.}. Therefore, the arbitrarily good approximation that we are appealing to here is that of Eq.~(\ref{to_verify}), which essentially states that the evolution of $\varrho^\phi$ converges to pure phase diffusion, with an error scaling as $1/p$ in the limit $\mu\to\infty$ (see Fig.~\ref{fig:diff_verify}a). Because $p$ must be large for this condition to be satisfied, and the coherence decreases with $p$ [see, e.g., Eq.~(\ref{coh_lam_lin}) for $\lambda=0$], the gap between the upper bound~(\ref{upper_bound}) and the coherence of the $p$-family in the linearised regime is greater than the gap between the original upper bound derived in Ref.~\cite{Baker2020} and the $p$-family with the optimized value $p\approx4.15$. Recall also that the fourth condition on the beam used in the theorems of Refs.~\cite{Ostrowski2022a,Baker2020,Ostrowski2022b} are strictly weaker than that used in this section, and do not require large $p$ to be valid.

\section{Conclusions and Outlook}
\label{conclusion}

In this work, we have provided a detailed analysis of the Heisenberg-limited families of laser models introduced in Refs.~\cite{Ostrowski2022a,Ostrowski2022b}. By specializing to a parameter regime that permits a linearization of the operators specifying the gain and loss within these systems, we were able to accurately characterize their dynamical behavior and obtain a number of insights. From this characterization, we found clear analogies with the dynamics that are commonly seen in standard laser models [i.e., models which have a coherence limited by $\mathfrak{C}=O(\mu^2)$], so many of the common intuitions from standard laser theory can be applied to our families, albeit with some key differences.

In the linearised regime, we found that the coherent dynamics of our families of laser models are describable by a PR ensemble of pure states $\{(\varrho^\phi,d\phi/2\pi)\}$, which evolves according to pure diffusion of the phase variable $\phi$. Although these dynamics resemble that of a stochastically evolving coherent state in a standard laser cavity, the cavity state in the Heisenberg-limited case is extremely phase squeezed. Intuitively, this is expected because the dynamics of these systems are specifically designed to preserve phase information within the cavity more effectively than in a standard laser. In addition, we found that the number dynamics of the cavity are describable as an Ornstein-Uhlenbeck process, which is what is also commonly seen in well-known standard laser models~\cite{Haake_1989,Walls_Milburn_2008}.

From our analysis of the dynamics in the linearised regime, we were able to evaluate the corresponding first- and second-order Glauber coherence functions of the laser beams for each family. This led us to obtain formulas for a number of important physical properties of the beam, including their Mandel-$Q$ parameters for arbitrary counting intervals [Eqs.~(\ref{qlam_final})~and~(\ref{qq_final})], the intensity noise spectrum of the beam [Eqs.~(\ref{SIlam_final})~and~(\ref{SIq_final})] and the Coherence [Eqs.~(\ref{coh_lam_lin})~and~(\ref{coh_q_lin})]. All of the analytical results derived here show excellent agreement with numerical simulations in the appropriate parameter regime, where the linearization is valid. It is notable for the sub-Poissonian families that, while the pure cavity state mentioned above was found to be substantially phase squeezed, the output beam is instead number squeezed. This occurs because of the specific nonlinear couplings that characterize the interaction between the laser cavity and its environment.

As a final result, we derived an upper bound for the Coherence of a laser beam. This upper bound exhibits the same $\mu^4$ scaling as in Ref.~\cite{Baker2020}, but the prefactor is tighter by a factor of $3/8$. The general methodology of proof used to derive this bound was the same as in Ref.~\cite{Baker2020}. However, we employed known results regarding optimal filtering and retrofiltering measurements to estimate the phase of the laser beam~\cite{Laverick_2018}, instead of the standard heterodyne measurements used in the original proof. The tighter upper bound accordingly followed from the consideration of these more precise phase estimation procedures. In order to use these known results on optimal filtering and retrofiltering, we required relatively strict constraints to be placed on a laser beam, where its properties are \textit{statistically equivalent} to a coherent state undergoing pure phase diffusion. Although this condition is stricter than those used in previous proofs~\cite{Baker2020,Ostrowski2022b}, we provided strong evidence that this condition is satisfied by a special case of the Heisenberg-limited families of laser models studied in this work.

With our results, we have also reproduced one of the curious observations made in Refs.~\cite{Ostrowski2022a,Ostrowski2022b}.  That is, the so-called ``win-win" relationship between coherence and sub-Poissonianity for Heisenberg-limited laser beams. This is where a positive correlation can be seen between a reduction in the number fluctuations [as quantified by a decrease in the beam's long-time Mandel-$Q$ parameter; see Eqs.~(\ref{qlam_final})~and~(\ref{qq_final})], and a reduction in the phase fluctuations of the beam [as quantified by an increase in the coherence; see Eqs.~(\ref{coh_lam_lin})~and~(\ref{coh_q_lin})]. The exploration of this result has not been a prevalent aim of this paper. However, we believe that the analysis presented here could enable a study into further relations between coherence and number squeezing in laser beams because Eq.~(\ref{g1_general_2}) shows influences from both the phase and number fluctuations on the first-order Glauber coherence function. Phase diffusion dominates the long-time behavior of that function, while the incoherent number dynamics leave a footprint on the short-time behavior. Moreover, given that the dynamics of these linearized Heisenberg-limited models are analogous to those seen with standard lasers, there is reason to believe that such an analysis could encompass a broad range of laser models.

\begin{acknowledgments}
This work was supported by an Australian Research Council Discovery Project (DP220101602) and an Australian Government RTP Scholarship. We acknowledge the support of the Griffith University eResearch Service and Specialised Platforms Team, and the use of the High Performance Computing Cluster “Gowonda” to complete this research.
\end{acknowledgments}

\section*{Data Availability}

The data that support the findings of this article are openly available~\cite{Ostrowski_GitHub}.

\bibliography{apssamp}

\appendix

\section{Modified Laser Pumping Statistics}\label{rp_apx}

In this appendix, we discuss some of the subtleties of the $p,q$-family of laser models and the regime of its validity. The master equation~(\ref{q_master}) describing this model can be derived by considering the repeated action of a CPTP map $\mathcal{G}$ acting on the laser state $\rho$. Here, the map $\mathcal{G}$ is interpreted as a single round of gain, which will raise the laser excitation number by one (${\rm Tr}[\hat{a}^\dagger\hat{a}\mathcal{G}\rho]={\rm Tr}[\hat{a}^\dagger\hat{a}\rho]+1$), and arises from the interaction between the laser mode and a source providing incoherent excitations. Specific instances of this map from particular physical models can be found in Refs.~\cite{Wiseman1993,Wiseman_1999}; at this point however, we do not make any further assumptions about its underlying form. 

The derivation proceeds by considering a time interval $\Delta t$ that is long, such that a large number $n(\Delta t)$ of single gain events have occurred, but short enough that the overall change in the laser state is small; i.e., $||n(\Delta t)(\mathcal{G}-\mathcal{I})\rho||\ll1\ll n(\Delta t)$ ($\mathcal{I}$ is the identity map). This enables the stepwise evolution throughout $\Delta t$ caused by multiple rounds of gain to be treated as a continuous process. Here, $\rho$ is considered to be a ``typical" state obtained throughout the evolution. From our analysis in Section~\ref{match_cav_diff}, it is reasonable consider these as arbitrary mixtures of states $\varrho^\phi$ defined in Eq.~(\ref{pure_cavity_state}). 

If the average rate of excitations added to the cavity is $\mathcal{N}$, and the statistics of $n(\Delta t)$ are characterised by the Mandel-Q parameter $q$, then the coarse-grained evolution caused by this gain process can be described by the time derivative~\cite{Haake_1989,Bergou_1989a,Bergou_1994,Walls_Milburn_2008}
\begin{align}\label{ln_gain}
    \left(\frac{d\rho}{dt}\right)_{\rm gain} = -\frac{\mathcal{N}}{q}{\rm ln}\left[\mathcal{I}-q\left(\mathcal{G}-\mathcal{I}\right)\right]\rho.
\end{align}
To make the analysis of these gain dynamics more tractable, it is common practice to expand the logarithm term under the condition that $||q(\mathcal{G}-\mathcal{I})\rho||\ll1$~\cite{Bergou_1989b,Walls_Milburn_2008}, which gives
\begin{align}\label{ln_gain_2}
    \begin{split}
        \left(\frac{d\rho}{dt}\right)_{\rm gain} = & \mathcal{N}\left((1-q)[\mathcal{G}-\mathcal{I}]+\frac{q}{2}[\mathcal{G}^2-\mathcal{I}]\right)\rho \\ & + O(||(\mathcal{G}-\mathcal{I})^3\rho||).
    \end{split}
\end{align}
For our specific model we set $\mathcal{G}=\hat{G}^{(p,0)}\bullet\hat{G}^{(p,0)\dagger}+\hat{P}\bullet\hat{P}$, with $\hat{P}=\ketbra{D-1}{D-1}$, which gives the gain term in Eq.~(\ref{q_master}) of the main text [for sufficiently large $D$, the second term involving the projector $\hat{P}$ is $O(||(\mathcal{G}-\mathcal{I})^3\rho||)$ for typical states $\rho$ and is therefore neglected from that equation]. Quantum operations similar to this choice for $\mathcal{G}$ have been realised in superconducting circuit experiments; see Ref.~\cite{Gertler_2021}, for example. For a more complete model, one can also incorporate a loss term $(d\rho/dt)_{\rm loss}=\mathcal{L_{\rm loss}\rho}$ for the beam formation, which can be simply added to Eq.~(\ref{ln_gain_2}) provided that the elements of the commutator $||\mathcal{N}[(\mathcal{G}-\mathcal{I})\mathcal{L_{\rm loss}}\rho,\mathcal{L_{\rm loss}}(\mathcal{G}-\mathcal{I})\rho]||$ are negligible compared to the leading terms in Eq.~(\ref{ln_gain_2})~\cite{Bergou_1994}. This yields the full $p,q$-family of laser models provided in the main text. For values $p>3$, the approximations described above become exact in the limit $\mu\to\infty$~\cite{Ostrowski2022a}.

The map $\mathcal{G}^m$ is CPTP; it follows that $(\mathcal{G}^m-\mathcal{I})$ can be expressed as a sum of Lindblad superoperators $\sum_i\mathcal{D}[\hat{K}^{(m)}_i]$, where the jump operators $\hat{K}^{(m)}_i$ are the Kraus operators of $\mathcal{G}^m$. For Poissonian and super-Poissonian pumping sequences, where $q\in[0,1)$, Eq.~(\ref{ln_gain_2}) is manifestly in Lindblad form. In contrast, for sub-Poissonian pumping [$q\in(-1,0)$], the negative coefficient $q/2$ in Eq.~(\ref{ln_gain_2}) prevents the gain dynamics to be expressible in Lindblad form. This issue persists even if all terms in the expansion of the logarithm in Eq.~(\ref{ln_gain}) are considered, and is a caveat that should be noted with all laser models that apply these standard techniques to model a sub-Poissonian pumping process~\cite{Haake_1989,Bergou_1989a,Bergou_1994,Walls_Milburn_2008}. Care must therefore be taken when analyzing results from the $p,q$-family of laser models in the sub-Poissonian pumping regime to ensure the underlying assumptions in its derivation remain valid. In linearized regime, where $1\ll p\ll\mu^2$, these assumptions are indeed well-justified~\cite{Ostrowski2022a}.

\begin{figure}[H]
\includegraphics[width=1.0\columnwidth]{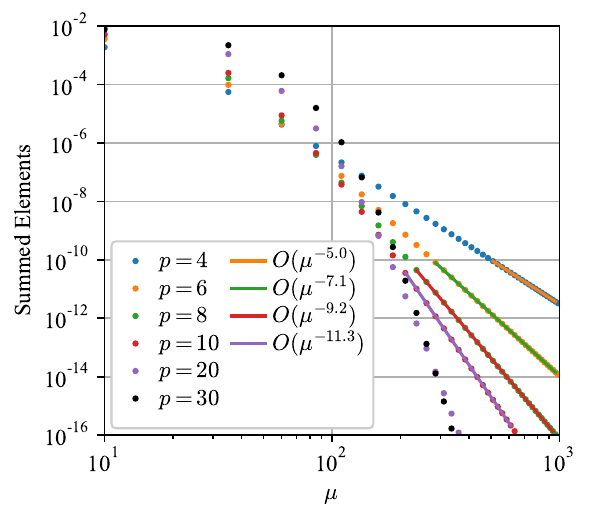}
\caption{\label{fig:apx1}Absolute value of the sum of the negative elements for the steady state distribution of the master equation Eq.~(\ref{q_master}) against $\mu$. Dots are numerical evaluations of this quantity and the solid lines represent power laws fitted to the numerical data below values of $10^{-10}$ for the summed elements, and with $p\in[4,10]$.}
\end{figure}

One consequence of this caveat is the appearance of unphysical effects when a sub-Poissonian pump is considered, such as small negative values in the steady state photon number distribution. These effect become negligible in the parameter regime that validates the approximations underpinning the model are valid. To demonstrate this, we numerically compute the steady-state distribution from Eq.~(\ref{q_master}) in the maximally sub-Poissonian case ($q\to-1$) for a range of values of $p$ and $\mu$. Fig.~\ref{fig:apx1} plots the absolute value of the sum of the negative elements in the resulting distributions as a function of $\mu$, revealing that these unphysical contributions diminish rapidly with $\mu\gg1$. For relatively small $p$, we observe a power-law behaviour, where the summed elements scale as $O(\mu^{p+1})$, with $\mu\gg1$. We expect this trend to persist for larger values of $p$ (e.g., $p=20$, $p=30$), given sufficiently large $\mu$, however rigorous numerical confirmation of this non-essential conjecture was not pursued. 

\section{Cavity Dynamics for the $p,\lambda$- and $p,q$-Families of Laser Models}\label{apendB}

In this Appendix, we provide details into the derivation of Eqs.~(\ref{diff_lam})~and~(\ref{diff_q}). In the linearized regime, these are the approximate expressions for the action of the Liouvillian of the two families of laser models on the pure cavity state $\varrho^\phi$ defined in Eq.~(\ref{pure_cavity_state}). 

We first consider the $p,\lambda$-family of laser models, which can be approximated by Eq.~(\ref{lam_mast_lin}) in the linearized regime. Substituting the pure state $\varrho^\phi$ into that linearized equation, we find
\begin{align}\label{B1}
    \begin{split}
        \frac{\bra{m}\dot\varrho^{\phi}\ket{n}}{\mathcal{N}e^{i\phi(m-n)}} & \\ & \hspace{-1.5cm} = G_{m}(k,\lambda)G_{n}(k,\lambda)\sqrt{\rho_{m-1}\rho_{n-1}} \\ & \hspace{-1cm} -\frac{1}{2}\big([G_{m+1}(k,\lambda)]^2 + [G_{n+1}(k,\lambda)]^2\big)\sqrt{\rho_{m}\rho_{n}} \\ & \hspace{-1cm} + L_{m+1}(k,\lambda)L_{n+1}(k,\lambda)\sqrt{\rho_{m+1}\rho_{n+1}} \\ & \hspace{-1cm} - \frac{1}{2}\left([L_{m}(k,\lambda)]^2 + [L_{n}(k,\lambda)]^2\right)\sqrt{\rho_{m}\rho_{n}}.
    \end{split}
\end{align}
From here, we factorize out the term $\sqrt{\rho_n\rho_m}$ to obtain
\begin{align}\label{B1_fac}
    \begin{split}
        \frac{\bra{m}\dot\varrho^{\phi}\ket{n}}{\mathcal{N}e^{i\phi(m-n)}\sqrt{\rho_{n}\rho_{m}}} & \\ & \hspace{-1.5cm} = G_{m}(k,\lambda)G_{n}(k,\lambda)\sqrt{\frac{\rho_{m-1}\rho_{n-1}}{\rho_{m}\rho_{n}}} \\ & \hspace{-1cm} -\frac{1}{2}\big([G_{m+1}(k,\lambda)]^2 + [G_{n+1}(k,\lambda)]^2\big) \\ & \hspace{-1cm} + L_{m+1}(k,\lambda)L_{n+1}(k,\lambda)\sqrt{\frac{\rho_{m+1}\rho_{n+1}}{{\rho_{m}\rho_{n}}}} \\ & \hspace{-1cm} - \frac{1}{2}\left([L_{m}(k,\lambda)]^2 + [L_{n}(k,\lambda)]^2\right).
    \end{split}
\end{align}
Here, the steady state elements satisfy the equation $\rho_n/\rho_{n-1} = [G_n(k,\lambda)/L_n(k,\lambda)]^2$. Substituting that into Eq.~(\ref{B1_fac}) gives
\begin{align}\raisetag{1\baselineskip}
    \begin{split}
        \frac{\bra{m}\dot\varrho^{\phi}\ket{n}}{\mathcal{N}e^{i\phi(m-n)}} = & -\frac{k^2(2\lambda^2-2\lambda+1)}{8}(m-n)^2\sqrt{\rho_m\rho_n} \\ & + O(k^3).
    \end{split}
\end{align}
Eq.~(\ref{diff_lam}) in the main text readily follows by keeping terms to lowest order in $k$ in the above equation.

A similar procedure is carried out for the $p,q$-family of laser models. Considering Eq.~(\ref{q_mast_lin}) and the pure state $\varrho^\phi$, we find
\begin{align}\raisetag{4\baselineskip}\label{B2}
    \begin{split}
        \frac{\bra{m}\dot\varrho^{\phi}\ket{n}}{\mathcal{N}e^{i\phi(m-n)}} & \\ & \hspace{-1.5cm} = \frac{q}{2}\sqrt{\rho_{m-2}\rho_{n-2}} + (1-q)\sqrt{\rho_{m-1}\rho_{n-1}} \\ & \hspace{-1cm} - (1-q/2)\sqrt{\rho_m\rho_n} \\ & \hspace{-1cm} + L_{m+1}(k,-q/2)L_{n+1}(k,-q/2)\sqrt{\rho_{m+1}\rho_{n+1}} \\ & \hspace{-1cm} - \frac{1}{2}\left([L_{m}(k,-q/2)]^2 + [L_{n}(k,-q/2)]^2\right)\sqrt{\rho_{m}\rho_{n}}.
    \end{split}
\end{align}
We then factorize out the term $\sqrt{\rho_m\rho_n}$, and expand the equation in orders of $k$, as was done for the $p,\lambda$-family. In this case, however, the steady state is characterized by a more complicated equation. That is, to second order in $k$, we have
\begin{align}\label{pq_recur}
    \begin{split}
        \frac{\rho_{n-1}}{\rho_{n}} = & 1 - \frac{k}{2}(1+q/2) + \frac{k^2}{4}(1+q/2)^2\left(n+1/2-\mu\right) \\ & + O(k^3).
    \end{split}
\end{align}
With Eq.~(\ref{pq_recur}), we evaluate~(\ref{B2}), and obtain
\begin{align}
    \begin{split}
        \frac{\bra{m}\dot\varrho^{\phi}\ket{n}}{\mathcal{N}e^{i\phi(m-n)}} = & -\frac{k^2(1+q/2)^2}{8}(m-n)^2\sqrt{\rho_m\rho_n} \\ & + O(k^3).
    \end{split}
\end{align}
Eq.~(\ref{diff_q}) in the main text readily follows by keeping terms to lowest order in $k$ in the above equation.

\section{Equation of Motion for the Off-Diagonal Matrix Elements}
\label{off-diag Append}

In this Appendix, we derive Eq.~(\ref{general OE}) in the main text. This equation approximately describes the dynamics of the off-diagonal matrix elements $\rho_{n,n-1}:=\bra{n}\rho\ket{n-1}$ for the $p,\lambda$- and $p,q$- families of laser models. We begin by considering the respective Master Equations (\ref{lam_mast_lin}) and (\ref{q_mast_lin}) for these two families in the linearized regime,
\begin{subequations}\label{C1}
    \begin{align}\label{C1_a}
        \begin{split}
            \frac{\dot{\rho}_{n,n-1}}{\mathcal{N}} = & G_{n}(k,\lambda)G_{n-1}(k,\lambda)\rho_{n-1,n-2} \\ & -\frac{1}{2}\big([G_{n+1}(k,\lambda)]^2 + [G_{n}(k,\lambda)]^2\big)\rho_{n,n-1} \\ & + L_{n+1}(k,\lambda)L_{n}(k,\lambda)\rho_{n+1,n} \\ & - \frac{1}{2}\left([L_{n}(k,\lambda)]^2 + [L_{n-1}(k,\lambda)]^2\right)\rho_{n,n-1},
        \end{split}
    \end{align}
    \begin{align}\label{C1_b}
        \begin{split}
            \frac{\dot{\rho}_{n,n-1}}{\mathcal{N}} = & \frac{q}{2}\rho_{n-2,n-3} + (1-q)\rho_{n-1,n-2} \\ & - (1-q/2)\rho_{n,n-1} \\ & + L_{n+1}(k,-q/2)L_{n}(k,-q/2)\rho_{n+1,n} \\ & - \frac{1}{2}\big([L_{n}(k,-q/2)]^2 \\ & \hspace{1cm} + [L_{n-1}(k,-q/2)]^2\big)\rho_{n,n-1}.
        \end{split}
    \end{align}
\end{subequations}

We now turn to a continuum description for Eqs.~(\ref{C1_a})~and~(\ref{C1_b}), whereby $n$ is approximated as a continuous variable and the substitutions $\rho_{n,n-1}(t)\to p_{\rm coh}(n,t)$, $G_{n+1}(k,x)\to G(n+1,k,x)=\sqrt{1-kx(n-\mu)}$ and $L_{n}(k,x)\to L(n,k,x)=\sqrt{1+k(1-x)(n-\mu)}$ are made. From these continuum equations, the terms involving $p_{\rm coh}(n-2,t)$ and $p_{\rm coh}(n\pm1,t)$ are expressed in terms of $p_{\rm coh}(n,t)$ by use of a Taylor series expansion, which is truncated at second order. This yields the following two equations
\begin{subequations}
    \begin{align}\label{C2_a}
        \begin{split}
            \frac{\dot{p}_{\rm coh}(n,t)}{\mathcal{N}} \approx & -\frac{1}{2}\Big{\{}\big[ G(n,k,\lambda) - G(n+1,k,\lambda) \big]^2 \\ & \hspace{1cm} + \big[ L(n-1,k,\lambda) - L(n,k,\lambda) \big]^2\Big{\}} \\ & \hspace{1cm} \times {p}_{\rm coh}(n,t) \\ & - \frac{\partial}{\partial n}\Big{\{}\big[G(n,k,\lambda)G(n+1,k,\lambda) \\ & \hspace{1.5cm} - L(n-1,k,\lambda)L(n,k,\lambda)\big] \\ & \hspace{1.25cm} \times{p}_{\rm coh}(n,t)\Big{\}} \\ & + \frac{1}{2}\frac{\partial^2}{\partial n^2}\Big{\{}\big[G(n,k,\lambda)G(n+1,k,\lambda) \\ & \hspace{1.75cm} + L(n-1,k,\lambda)L(n,k,\lambda)\big] \\ & \hspace{1.6cm} \times{p}_{\rm coh}(n,t)\Big{\}},
        \end{split}
    \end{align}
    \begin{align}\label{C2_b}
        \begin{split}
            \frac{\dot{p}_{\rm coh}(n,t)}{\mathcal{N}} \approx & - \frac{1}{2}\big[ L(n-1,k,-q/2)  \\ & \hspace{1cm} - L(n,k,-q/2) \big]^2{p}_{\rm coh}(n,t) \\ & + \frac{\partial}{\partial n}\Big{\{}\big[ L(n-1,k,-q/2)L(n,k,-q/2,\mu) \\ & \hspace{1.5cm} - 1 \big]{p}_{\rm coh}(n,t)\Big{\}} \\ & + \frac{1}{2}\frac{\partial^2}{\partial n^2}\Big{\{}\big[L(n-1,k,-q/2)L(n,k,-q/2) \\ & \hspace{1.75cm} + 1 + q \big]{p}_{\rm coh}(n,t)\Big{\}}.
        \end{split}
    \end{align}
\end{subequations}
Finally, we substitute the explicit expressions for the functions $G(n+1,k,x)$ and $L(n,k,x)$ into (\ref{C2_a}) and (\ref{C2_b}) and keep the leading order terms in $k$ within each derivative. Thus, we find
\begin{subequations}
    \begin{align}\label{C3_a.1}
        \begin{split}
            \frac{\dot p_{\rm coh}(n,t)}{\mathcal{N}} \approx & - (2\lambda^2-2\lambda+1)\frac{k^2}{8}p_{\rm coh}(n,t) \\ & + \frac{\partial}{\partial n}\left[k(n-\mu-1/2)p_{\rm coh}(n,t)\right] \\ & + \frac{\partial^2}{\partial n^2}p_{\rm coh}(n,t),
        \end{split}
    \end{align}
    \begin{align}\label{C3_a.2}
        \begin{split}
            \frac{\dot p_{\rm coh}(n,t)}{\mathcal{N}} \approx & - (1+q/2)^2\frac{k^2}{8}p_{\rm coh}(n,t) \\ & + \frac{\partial}{\partial n}\left[k(1+q/2)(n-\mu-1/2)p_{\rm coh}(n,t)\right] \\ & + (1+q/2)\frac{\partial^2}{\partial n^2}p_{\rm coh}(n,t).
        \end{split}
    \end{align}
\end{subequations}
Both of these equations have the same the general form as that presented in Eq.~(\ref{general OE}) in the main text.

\section{Verification of the Linearized Approximation for Filtering}
\label{Validity of Linearized Retrofiltering Approximation}

In this appendix, we show that the we can legitimately use Eq.~(\ref{MSE_Filt}) for the proof presented in Section~\ref{tighter ub}. This equation corresponds to the MSE in an optimal filtering measurement used to determine the phase $\phi(t)$ of a coherent beam, which varies as a Wiener process. 

In Ref.~\cite{Laverick_2018}, it was shown that an optimal filtering measurement can be achieved with an adaptive homodyne detection scheme. For such a scheme, the detected photocurrent is given by
\begin{align}\label{photocurrent}
    \begin{split}
        I & = 2\sqrt{\mathcal{N}}\sin[\phi(t)-\phi_{\rm LO}(t)] + \zeta(t) \\ & \approx 2\sqrt{\mathcal{N}}[\phi(t)-\phi_{\rm LO}(t)] + \zeta(t),
    \end{split}
\end{align}
where $\mathcal{N}$ is the photon flux of the beam, $\zeta(t)$ is real, classical white noise, satisfying $\langle \zeta(t)\zeta(t') \rangle = \delta(t-t')$ and $\phi_{\rm LO}(t)$ is the phase of a local oscillator, which may be controlled. In moving from the first to second line of Eq.~(\ref{photocurrent}), it is assumed that $|\phi_{\rm LO}-\phi|\ll1$, thus permitting a linearization of the sine function. This linearization allowed the system to be formulated as a linear Gaussian (LG) estimation problem in Ref.~\cite{Laverick_2018}, which led to Eq.~(\ref{MSE_Filt}). For that result to be utilized for our purposes, we must show that a sufficiently precise estimate of the phase can be obtained---for which
the approximation in Eq.~(\ref{photocurrent}) is reasonable---within a timescale that is much shorter that the timescale of the optimal filtering measurement. We will show that this can be achieved with heterodyne detection.

For the case where phase of the beam varies according to a Weiner process, an optimal, unbiased filtering estimator ${\phi}_{\rm F}(t)$ of $\phi(t)$ is the solution to the stochastic differential equation~\cite{Laverick_2018}
\begin{align}\label{filt_estimator}
    d{\phi}_{\rm F}(t) = -4\mathcal{N}V_{\rm F}{\phi}_{\rm F}(t)dt+\sqrt{4\mathcal{N}}V_{\rm F}y(t)dt,
\end{align}
where we have the variance $V_{\rm F}=\langle[{\phi}_{\rm F}(t) - \phi(t)]^2\rangle$ and $y(t):=I(t)+\sqrt{4\mathcal{N}}{\phi}_F(t)$. The stationary variance can be obtained by solving the algebraic matrix Riccati equation~\cite{wiseman_milburn_2009}, yielding $V_{\rm F} = \sqrt{\ell/4\mathcal{N}} = \mathfrak{C}^{-1/2}$, where $\ell$ is the phase diffusion rate of the beam and $\mathfrak{C} = 4\mathcal{N}/\ell$ is its coherence. This is the minimum achievable MSE in a filtering phase estimate of an ideal laser beam~(\ref{ideal laser}), assuming the linearized form of Eq.~(\ref{photocurrent}).

To validate the linearized form of Eq.~(\ref{photocurrent}), and hence the result described above, we first use Eq.~(\ref{filt_estimator}) and $V_{\rm F}$ to define the timescale over which the optimal filtering measurement takes place; i.e., $\tau_{\rm F}:=\sqrt{\mathfrak{C}}/4\mathcal{N}$. Next, we consider a phase estimate $\phi_{\rm het}$ of the coherent beam phase obtained from heterodyne detection. From this estimate, we require that it be obtained within a time $\tau_{\rm het}$ that is much smaller than the filtering timescale,
\begin{align}\label{het_filt_tscale}
    \tau_{\rm het} \ll \tau_{\rm F} = \frac{\sqrt{\mathfrak{C}}}{4\mathcal{N}},
\end{align}
and that it yields a sufficiently small MSE 
\begin{align}
    \langle(\phi-\phi_{\rm het})^2\rangle \ll 1,
\end{align}
so that the linearization in~(\ref{photocurrent}) is reasonable.

It is a known result~\cite{Wiseman_Heterodyne_1997} that the MSE in a phase estimate $\phi_{\rm het}$ of coherent beam with amplitude $\beta$, and a \textit{constant} phase $\phi$ using heterodyne detection is, ideally, 
\begin{align}\label{6B}
    \langle(\phi-\phi_{\rm het})^2\rangle \sim \frac{1}{2|\beta|^2},
\end{align}
in the limit that $|\beta|\to\infty$. With this result, we can place further constraints on $\tau_{\rm het}$. First, because Eq.~(\ref{6B}) holds for a static phase, we require the timescale of the heterodyne measurement $\tau_{\rm het}$ to be much less than the coherence time of the beam $1/\ell$. Second, $\tau_{\rm het}$ must also be large to ensure a sufficiently small MSE $\langle(\phi-\phi_{\rm het})^2\rangle$. Thus, we have the two further conditions
\begin{align}\label{t_het_2}
    \tau_{\rm het} \ll 1/\ell \quad {\rm and} \quad \frac{1}{2}\ll\mathcal{N}\tau_{\rm het}.
\end{align}
If these conditions are met, then $|\beta|^2 \approx \mathcal{N}\tau_{\rm coh}$ and $\langle(\phi-\phi_{\rm het})^2\rangle \approx 1/2\mathcal{N}\tau_{\rm het}\ll1$ [note that with large $\mathfrak{C}$, then $\tau_{\rm het} \ll 1/\ell$ is redundant, given Eq.~(\ref{het_filt_tscale})].

The constrains on $\tau_{\rm het}$ detailed in Eqs.~(\ref{het_filt_tscale}) and (\ref{t_het_2}) can be concisely summarized with the following condition
\begin{align}
    1/2\ll\mathcal{N}\tau_{\rm het}\ll\sqrt{\mathfrak{C}}/4.
\end{align}
Clearly, this constraint can be satisfied if $\mathfrak{C}$ is sufficiently large. In this regime, one can perform a heterodyne measurement on the beam, which yields a sufficiently accurate initial estimate of its phase to validate the linearization in Eq.~(\ref{photocurrent}). This is the relevant regime for us, and therefore justifies our use of the results regarding optimal filtering and retrofiltering phase measurements from Ref.~\cite{Laverick_2018}.

\end{document}